\def\fskip#1{}
\newtheorem{theorem}{Theorem}
\newtheorem{corollary}{Corollary}
\newtheorem{definition}{Definition}
\newtheorem{example}{Example}
\newtheorem{lemma}{Lemma}
\newtheorem{proposition}[theorem]{Proposition}
\newtheorem{remark}{Remark}
\def\1{{\bf 1}}
\def\N{\mathbb{N}}
\def\N{\mathcal{N}}
\def\R{\mathbb{R}}
\newcommand{\remove}[1]{}
\def\argmin{\mathop{\rm argmin}}
\begin{document}
\title{Game-Theoretic Analysis of the Hegselmann-Krause Model for Opinion Dynamics in Finite Dimensions*}
\author{\authorblockN{Seyed Rasoul Etesami, Tamer Ba\c{s}ar}
  \authorblockA{Coordinated Science Laboratory, University of Illinois at Urbana-Champaign,  Urbana, IL 61801\\
     Email: (etesami1, basar1)@illinois.edu}
\thanks{*Research supported in part by the ``Cognitive \& Algorithmic Decision Making" project grant through the College of Engineering of the University of Illinois, and in part 
by AFOSR MURI Grant FA 9550-10-1-0573 and NSF grant CCF 11-11342.}
}
\maketitle
\begin{abstract}
We consider the Hegselmann-Krause model for opinion dynamics and study the evolution of the system under various settings. We first analyze the termination time of the synchronous Hegselmann-Krause dynamics in arbitrary finite dimensions and show that the termination time in general only depends on the number of agents involved in the dynamics. To the best of our knowledge, that is the sharpest bound for the termination time of such dynamics that removes dependency of the termination time from the dimension of the ambient space. This answers an open question in \cite{bhattacharyya2013convergence} on how to obtain a tighter upper bound for the termination time. Furthermore, we study the asynchronous Hegselmann-Krause model from a novel game-theoretic approach and show that the evolution of an asynchronous Hegselmann-Krause model is equivalent to a sequence of best response updates in a well-designed potential game. We then provide a polynomial upper bound for the expected time and expected number of switching topologies until the dynamic reaches an arbitrarily small neighborhood of its equilibrium points, provided that the agents update uniformly at random. This is a step toward analysis of heterogeneous Hegselmann-Krause dynamics. Finally, we consider the heterogeneous Hegselmann-Krause dynamics and provide a necessary condition for the finite termination time of such dynamics. In particular, we sketch some future directions toward more detailed analysis of the heterogeneous Hegselmann-Krause model.
\end{abstract}

\smallskip
\begin{keywords}
Multidimensional Hegselmann-Krause model; homogeneous, heterogeneous, synchronous, asynchronous, opinion dynamics; potential game; strategic equivalence; best response dynamics.
\end{keywords}

\section{Introduction}
Opinion formation in social networks is an important area of research that has attracted a lot of attention in recent years in a wide range of disciplines, such as psychology, economics, political science, and electrical and computer engineering. A natural question that commonly arises in all those areas is the extent to which one can predict the outcome of the opinion formation of entities under some complex interaction process running among these social actors. Consensus problems in which a set of agents are trying to achieve the same goal have been addressed by many researchers, such as \cite{consensusb,jadbabaie2003coordination,olfati2004consensus,Kashyab,finitecp,degroot1974reaching,alex,gossipca,optimalc,finitec,etesami2013quantized}. In such problems, which are still an active area of research, the goal is to achieve a certain agreement among agents. However, there are many of situations in which there is neither a desire for consensus nor any tendency for the underlying process to approach a common outcome. In fact, such situations frequently emerge in the context of political elections and product marketings when there are multiple candidates or product choices to be selected among. Those facts have motivated researchers to study disagreement along with consensus. 

\smallskip
One of the first studies that considers disagreement beside consensus was undertaken by Friedkin and Johnsen \cite{friedkin1999social}, whose model was later extended by Hegselmann and Krause in \cite{hegselmannk}, in the sense that \cite{hegselmannk} relaxes the assumption of time-invariant influence weights among the agents. More precisely, the Hegselmann-Krause dynamics allow the influence weights to be a function of not only time, but also the states. It is worth noting that although such extensions make the analysis of Hegselmann-Krause dynamics mathematically much more complicated but interesting, one may argue that the assumption of influence weights depending on the evolving opinion distance (which is the case in the Hegselmann-Krause dynamics) is questionable from a practical point of view, given the literature in experimental social psychology, e.g., see  \cite{stiff2003persuasive,friedkin2011social}, where social psychologists have long been intrigued by the hypothesis that opinion differences reliably predict direct relations of interpersonal influence. Still, a rigorous analysis of the Hegselmann-Krasue dynamics is both theoretically and practically important. The theoretical aspects is that it allows us to develop novel tools useful to study more complex time and state dependent evolutionary dynamics and elaborate on their connections with other fields. The practical aspect is that, other than applications in the modeling of opinion dynamics, the model has applications in the robotics rendezvous problem in plane and space \cite{Bullo}. Accordingly, we consider the Hegselmann-Krause model in $\R^{d}$, where $d\ge 1$.   

\smallskip
In the Hegselmann-Krause model, a finite number of agents frequently update their opinions based on the possible interactions among them. The opinion of each agent in this model is captured by a scalar quantity in one dimension or a vector in Euclidean space $\mathbb{R}^{d>1}$ in higher dimensions. In fact, because of the conservative nature of social entities, each agent in this model communicates only with those whose opinions are closer to him and lie within a certain level of his confidence (bound of confidence), where the distance between agents' opinions is measured by the Euclidian norm in the ambient space. Depending on whether the bound of confidence is the same for all the agents or not, one can distinguish two different types of dynamics, known as \textit{homogeneous} and \textit{heterogeneous}, respectively. Moreover, the updating process of the agents may be synchronous, meaning that all the agents update simultaneously, or asynchronous, where the agents update in turn. Although at first glance the differences among these four types of dynamics may seem negligible, in fact, their outcomes are substantially different, such that most of the results from one cannot be carried over to the others \cite{lorenzt,lorenz2010heterogeneous,mirtabatabaei2012opinion}. In particular, because of the extra freedom for the agents' movements in higher dimensions, analyzing such dynamics for dimensions higher than one is considerably more complex than for one dimension \cite{etesami2013termination,bhattacharyya2013convergence,chazelle2011total}.  

It is known that synchronous homogeneous Hegselmann-Krause dynamics will terminate after finitely many steps 
\cite{hegselmannk,lorenzt}. The same model 
has also been used for distributed rendezvous in a robotic network \cite{Bullo,robot-krause}. In the model, depending on the initial profile and the confidence bound, the final state may or may not 
be a consensus. The existing studies on the behavior of the Hegselmann-Krause model in one dimension where the agents' opinions are scalars can be found in \cite{lorenzc}. It was shown in \cite{Bullo} that the termination time of the Hegselmann-Krause dynamics in one dimension is at least $O(n)$, where $n$ is the number of agents, and at most $O(n^3)$ \cite{bhattacharyya2013convergence,mohajer2012convergence}. Moreover, the stability and the termination time of such dynamics in higher dimensions were studied in \cite{nedic2012multi,etesami2013termination}, and the work in \cite{etesami2013termination} bounds the termination time of such dynamics using the number of isolated agents through the evolution of the dynamics. In a recent work of Bhattacharyya et al. \cite{bhattacharyya2013convergence}, a polynomial upper bound of $O(n^{10}d^2)$ was given for such dynamics in higher dimensions, but leaving the dependency of such a bound on the dimension of ambient space as an open problem. In this work, we improve the upper bound to $O(n^8)$ and show that the termination time is, indeed, independent of the dimension of the ambient space.   

The asynchronous homogeneous Hegselmann-Krause model was considered in \cite{touri2014endogenous}, where the authors were able to establish stability of this model using a proper quadratic comparison function when the probability of updating for each agent is uniformly bounded from below by some positive constant $p>0$. In this paper, we model the evolution of such dynamics as a sequence of best response updates in a potential game and provide a polynomial upper bound for the maximum expected switching topologies and the expected time it takes for the dynamics to reach an arbitrarily small neighborhood of its steady state provided that the agents update uniformly at random. We refer readers to \cite{marden2009cooperative} and \cite{rantzer2008using} for some of the possible connections between control of distributed systems and potential games. Furthermore, the synchronous heterogeneous Hegselmann-Krause model was studied in \cite{lorenz2010heterogeneous}, and recently in \cite{mirtabatabaei2012opinion}, where the authors conjecture that the number of switching topologies throughout the dynamics must be finite. In fact, our analysis for an asynchronous homogeneous Hegselmann-Krause model here is a step toward more detailed analysis of the heterogeneous model using an appropriate potential function over directed graphs \cite{zhang2012lyapunov,olfati2004consensus}. Furthermore, numerous simulation results have been conducted to study and explore the evolutionary properties of the Hegselmann-Krause dynamics under various settings. For more information, we refer the reader to \cite{lorenzt,lorenz2010heterogeneous,hegselmannk,julient}.

This paper is organized as follows. In Section~\ref{sec:HKdyn}, we review the Hegselmann-Krause dynamics under various settings. In Section~\ref{sec:prelimresults}, 
we develop some preliminary results and mention some existing results for later use. In Section \ref{sec:mainresults-one} we consider the synchronous Hegselmann-Krause model in arbitrary finite dimensions and provide a polynomial upper bound for the termination time, independent of the dimension of the opinion space. That not only improves on the previous bounds, but also removes the dependency of the termination time on the dimension of the ambient space. In Section~\ref{sec:mainresults-two}, we model the asynchronous Hegselmann-Krause model as a potential game and provide its corresponding potential function. Using that function, we bound the expected number of switching topologies of the network when the agents update their opinions uniformly at random. Moreover, we provide an upper bound for the expected number of steps until the agents reach a $\delta$-neighborhood of their steady state for some $\delta>0$. We also directly show strategic equivalence of the game to a team problem. In Section \ref{sec:mainresults-three} we turn our attention to the heterogeneous Hegselmann-Krause model and provide a necessary condition for such dynamics to terminate in finite time. In Section \ref{sec:discussion}, using the tools developed in this work, we discuss some of the possible future directions toward more detailed analysis of heterogeneous Hegselmann-Krause dynamics. We conclude the paper with the final remarks of Section~\ref{sec:conclusion}. 

\textbf{Notations}: 
For a positive integer $n$, we let $[n]:=\{1,2,\ldots,n\}$. For a vector $v\in \R^n$, we let $v_i$ be the $i$th entry of $v$. We say that $v$ is \textit{stochastic} if $v_i\geq 0$ for all $i\in [n]$ and $\sum_{i=1}^nv_i=1$. Similarly, for a matrix $A$, we let $A_{ij}$ be the $ij$th entry of $A$. We say that $A$ is \textit{stochastic} (or \textit{row-stochastic}) if each of its rows is stochastic, and we let $\min^+A=\min_{i,j}\{A_{ij}| A_{ij}>0\}$. We use $A_i$ to denote the $i$th row of $A$. We use $A'$ to denote the transpose of a matrix $A$, and $\|v\|$ to denote the Euclidean norm of a vector $v$. We let the consensus vector $\bold{1}$ be a vector of unit size ($\|\bold{1}\|=1$) with equal entries. For a matrix $A$ with real eigenvalues, we let $\lambda_2(A)$ be its second smallest eigenvalue. A \textit{scrambling matrix} is a stochastic matrix such that the inner product of each pair of its rows is positive. For a vector $y$ we use $conv(y)$ to show the convex hull of its components and $diam(conv(y))=\max_{p,q\in conv(y)}\|p-q\|$. We define the distance between two sets $P, Q\subseteq \R^n$ to be $dist(P,Q)=\inf_{p\in P, q\in Q }\|p-q\|$. For a graph $\mathcal{G}$, we let $\mathcal{A}_{\mathcal{G}}$ be its adjacency matrix and $\mathcal{D}_\mathcal{G}$ be a diagonal matrix whose diagonal entries are equal to the degree of the nodes in the graph. Moreover, we use $\mathcal{L}_{\mathcal{G}}=A_{\mathcal{G}}-\mathcal{D}_\mathcal{G}$ to denote the Laplacian of that graph. Finally, we use $|S|$ to denote the cardinality of a finite set $S$.

\bigskip
\section{Hegselmann-Krause Dynamics}\label{sec:HKdyn}
In this section we describe the discrete-time Hegselmann-Krause opinion dynamics model as introduced in \cite{hegselmannk}.

Let us assume that we have a set of $n$ agents $[n]=\{1,\ldots,n\}$ and we want to model the interactions among their opinions. It is assumed that at each time $t=0, 1, 2, \ldots$, the opinion of agent $i\in[n]$ can be represented by a vector $x_{i}(t)\in \R^d$ for some $d\geq 1$. According to that model, the evolution of opinion vectors can be modeled by the following discrete-time dynamics: 
\begin{align}\label{eqn:DynamicForm}
x(t+1)=A(t,x(t),\vec{\epsilon})x(t),
\end{align}
where $A(t,x(t),\vec{\epsilon})$ is an $n\times n$ row-stochastic matrix and $x(t)$ is the $n\times d$ matrix such that its $i$th row contains the opinion of the $i$th agent at time $t=0, 1, 2, \ldots$, 
i.e.,\ it is equal to $x_{i}(t)$. We refer to $x(t)$ as the \textit{opinion profile} at time $t$. The entries of $A(t,x(t),\vec{\epsilon})$ are functions of time step $t$, current profile $x(t)$, confidence vector $\vec{\epsilon}=(\epsilon_1,\epsilon_2,\ldots,\epsilon_n)>0$ and an updating scheme. The parameters $\epsilon_i, i\in [n]$ are referred to as the 
\textit{confidence bounds}. In the homogeneous case of the dynamics, we assume that $\epsilon_i=\epsilon,\  \forall i\in [n]$ for some $\epsilon>0$, while in the heterogeneous model, different agents may have different bounds of confidence. Our focus in this paper is mainly on the homogeneous model, but we also analyze the heterogeneous case toward the end, in Section \ref{sec:mainresults-three}. For the sake of simplicity of notation and for a fixed $x(0)\in \mathbb{R}^{n\times d}$, we drop the dependency of $A(t,x(t),\vec{\epsilon})$ on $x(t)$ and $\epsilon$ and simply write $A(t)$. In what follows next, we distinguish two different versions of Hegselmann-Krause dynamics.

\subsection{Synchronous Hegselmann-Krause Model}
In the synchronous Hegselmann-Krause model, each agent $i$ updates its value at time $t=0, 1, 2, \ldots$, by averaging 
its own value and the values of all the other agents that are in its $\epsilon$-neighborhood at time $t$. To be more specific, given a profile $x(t)$ 
at time $t$, define the matrix $A(t)$ in \eqref{eqn:DynamicForm} by:
\begin{align}\label{eq:synchronous-Hegselmann-Krause}
A_{ij}(t) &= \begin{cases} \frac{1}{|\N_i(t)|} & \mbox{if } j\in \N_i(t), \\
 0 & \mbox{ else}, \end{cases}
\end{align}
where 
$\N_i(t)$ is the set of agents in the $\epsilon$-neighborhood of agent $i$, i.e.,\
 \[\N_i(t)=\{j\in[n]\mid \|x_i(t)-x_j(t)\|\leq \epsilon\}.\]

\smallskip
\subsection{Asynchronous Hegselmann-Krause Model}
In the asynchronous case and at each time instant $t=0,1,2,\ldots$, only one agent, namely $i^*$, updates its value to the average of its neighbors, while the others remain unchanged. Selection of such an agent may be at random or based on some predefined order. In this paper, we assume that the agents are chosen uniformly at random to update their opinions. In that case the updating matrix $A(t,x(t),\vec{\epsilon})$ given in \eqref{eqn:DynamicForm} can be written as 

\begin{align}\label{eq:asynchronous-Hegselmann-Krause}
A_{ij}(t) = \begin{cases} \frac{1}{|\N_{i^*}(t)|} & \mbox{if} \ \ i=i^*, j\in \N_{i^*}(t), \\ 
1 & \mbox{if} \ \ i=j\neq i^* \\
0 & \mbox{else}, \end{cases}
\end{align} 
where here we have assumed that agent $i^*$ updates its opinion at time $t$.

\bigskip
\begin{remark}
In the heterogeneous Hegselmann-Krause model, each agent $i$ is able to observe only its $\epsilon_i$-neighborhood, and we have 
\[\N_i(t,\epsilon_i)=\{j\in[n]\mid \|x_i(t)-x_j(t)\|\leq \epsilon_i\}.\]
\end{remark}

\smallskip
\begin{remark}
There are other types of Hegselmann-Krause dynamics where the evolution of dynamics is subject to noise or perturbation in the system or when the agents are truth seekers in the sense that they are attracted by the truth by a positive amount \cite{pineda2013noisy,kurz2011hegselmann}. Moreover, the continuous version of the Hegselmann-Krause model, in which a continuum of opinions are involved in the dynamics, has been considered in \cite{hendrickx2013symmetric,blondel20072r,lorenz2007continuous}.   
\end{remark}

\smallskip
\begin{remark}
As can be seen from the above formulations, the Hegselmann-Krause dynamics do not preserve the opinion average of the agents, and the evolution of the system strongly depends on the history and the states, which may switch between different topologies. In fact, it is not possible to determine the topology of the network at the current time, unless one can observe the state of the system in the previous time step. Those facts make the analysis of such dynamics much more complicated than analysis of the average-preserving dynamics with fixed topology.   
\end{remark}

\bigskip
\section{Preliminary Results}\label{sec:prelimresults}
In this section, we briefly discuss some preliminary results and provide some definitions that will be used to prove our main results.

\begin{lemma}\label{lemma:perron-frobenius}
(\textit{Perron-Frobenius for Laplacians \cite{saloff1997lectures}}): Let $\mathcal{L}$ be a matrix with non-positive off-diagonal entries such that the graph of the non-zero off-diagonal entries is connected. Then the smallest eigenvalue has multiplicity 1, and the corresponding eigenvector is strictly positive.
\end{lemma}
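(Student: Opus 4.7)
The plan is to reduce the statement to the classical Perron--Frobenius theorem for non-negative irreducible matrices via a spectral shift. Pick any scalar $c \geq \max_{i}\mathcal{L}_{ii}$ and set $M := cI - \mathcal{L}$. Every off-diagonal entry of $M$ equals $-\mathcal{L}_{ij}$, which is non-negative by hypothesis, and every diagonal entry is $c - \mathcal{L}_{ii}\geq 0$, so $M$ is entrywise non-negative. Moreover, the graph whose edges record the non-zero off-diagonals of $M$ is literally the same graph as that of $\mathcal{L}$ (multiplication by $-1$ does not destroy any entry), and is therefore connected; this is exactly the standard criterion for $M$ to be irreducible.

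Once $M$ is cast as a non-negative irreducible matrix, the classical Perron--Frobenius theorem applies directly and yields: (i) the spectral radius $\rho(M)$ is an eigenvalue of $M$ of algebraic multiplicity one; and (ii) it admits a (unique up to scaling) eigenvector $v$ with strictly positive entries. The last step is to pull this conclusion back to $\mathcal{L}$. Since $M = cI - \mathcal{L}$, eigenvectors are preserved and eigenvalues transform by $\mu \mapsto c - \mu$; in particular, the largest eigenvalue of $M$ corresponds to the smallest eigenvalue of $\mathcal{L}$, with the same eigenvector and the same algebraic multiplicity. Therefore the smallest eigenvalue of $\mathcal{L}$ is simple and admits a strictly positive eigenvector, as claimed.

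The only subtle point I would flag is checking that the hypothesis \textit{graph of non-zero off-diagonals is connected} is indeed the right notion to invoke irreducibility of $M$. In the symmetric case, which is the setting of the graph Laplacians actually used later in the paper, connectedness of the underlying undirected graph is equivalent to strong connectedness of the symmetric directed graph of $M$, and irreducibility follows immediately. In the general (non-symmetric) case one interprets the hypothesis as strong connectedness of the associated directed graph, and nothing else in the argument changes. I do not foresee a real obstacle here: all of the nontrivial work is contained inside the classical Perron--Frobenius theorem, and the shift $M = cI - \mathcal{L}$ is designed precisely to meet its hypotheses.
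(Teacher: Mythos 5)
Your proof is correct. Note that the paper does not actually prove this lemma --- it is quoted from the literature (\cite{saloff1997lectures}) without argument --- so there is no in-paper proof to compare against; your spectral-shift reduction $M = cI - \mathcal{L}$ to the classical Perron--Frobenius theorem for non-negative irreducible matrices is the standard and essentially canonical way to establish it. All the steps check out: the shift preserves the off-diagonal zero pattern (hence irreducibility), the map $\mu \mapsto c-\mu$ carries the characteristic polynomial of $M$ to that of $\mathcal{L}$ up to sign, so algebraic multiplicities and eigenvectors transfer, and $\rho(M)$ corresponds to the eigenvalue of $\mathcal{L}$ of smallest real part (unique with that real part, since $\mathrm{Re}(\mu)=\rho(M)$ together with $|\mu|\le\rho(M)$ forces $\mu=\rho(M)$). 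Your closing caveat about connectedness versus strong connectedness is well placed; in the paper's actual use case $\mathcal{L}$ is the symmetric Laplacian of an undirected communication graph, so the two notions coincide and no issue arises.
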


\smallskip
Next, we state Cheeger's inequality, which relates the spectral gap of the Laplacian matrix to the expansion of its corresponding graph.

\smallskip
\begin{lemma}[Cheeger's Inequality \cite{godsil2001algebraic}]\label{lemma:cheegers-inequality}
Let $\mathcal{G}=(\mathcal{V},\mathcal{E})$ be an undirected graph with Laplacian matrix $\mathcal{L}$. Moreover, for a subset of vertices $S\subseteq\mathcal{V}$ let $e(S,S^c)$ denote the number of edges with one vertex in $S$ and one vertex in its complement $S^c$. Defining the cut ratio $\Phi(S)=\frac{e(S,S^c)}{|S||S^c|}$ and isoperimetric number of $\mathcal{G}$ by $\Phi=\min_{S\subset \mathcal{V}}\Phi(S)$, we have
\begin{align}\nonumber 
\frac{\Phi^2}{2d_{max}}\leq \lambda_2(\mathcal{L})\leq 2\Phi,
\end{align}
where $d_{max}$ denotes the maximum degree of the graph $\mathcal{G}$ and $\lambda_2(\mathcal{L})$ is the second smallest eigenvalue of the Laplacian $\mathcal{L}$. 
\end{lemma}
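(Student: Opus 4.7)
The plan is to establish both inequalities via the Courant-Fischer variational characterization of $\lambda_2$. Since $\mathcal{L}\mathbf{1}=0$ on a connected graph, the second smallest Laplacian eigenvalue admits the Rayleigh-quotient representation
$$\lambda_2(\mathcal{L}) \;=\; \min_{\substack{x \perp \mathbf{1} \\ x \neq 0}} \frac{\sum_{(i,j)\in\mathcal{E}}(x_i - x_j)^2}{\sum_i x_i^2},$$
so each bound reduces either to exhibiting, or to estimating against, test vectors in this quotient.

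For the upper bound, which is the easy direction, I would use a cut-indicator as the test vector. Given any $S \subsetneq \mathcal{V}$, set $x_i = |S^c|$ for $i \in S$ and $x_i = -|S|$ for $i \in S^c$. Then $x \perp \mathbf{1}$, the denominator of the Rayleigh quotient evaluates to $n\,|S|\,|S^c|$, and the numerator evaluates to $n^2\,e(S,S^c)$ because only cut edges contribute and each contributes $(|S|+|S^c|)^2 = n^2$. Plugging in yields a quotient proportional to $\Phi(S)$; minimizing over $S$ and tracking normalizations produces the claimed upper bound.

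For the lower bound, I would invoke the classical \emph{sweep} (Cheeger rounding) argument applied to a $\lambda_2$-eigenvector $f$. After splitting into positive and negative parts (using $f \perp \mathbf{1}$) one reduces to a nonnegative $g$ supported on at most half the vertices that still satisfies $\sum_{(i,j)\in\mathcal{E}}(g_i - g_j)^2 \leq \lambda_2(\mathcal{L})\|g\|^2$. Ordering the vertices by $g$-value and considering the super-level sets $S_t = \{i : g_i \geq t\}$, the plan is to combine the coarea-type identity
$$\sum_{(i,j)\in\mathcal{E}} \bigl|g_i^2 - g_j^2\bigr| \;=\; \int_0^\infty e\bigl(S_{\sqrt{s}},\, S_{\sqrt{s}}^c\bigr)\, ds$$
with Cauchy-Schwarz
$$\sum_{(i,j)} \bigl|g_i^2 - g_j^2\bigr| \;\leq\; \Bigl(\sum_{(i,j)}(g_i - g_j)^2\Bigr)^{1/2}\Bigl(\sum_{(i,j)}(g_i + g_j)^2\Bigr)^{1/2},$$
where the second factor is bounded by $\sqrt{2 d_{\max}}\,\|g\|$ via $(g_i+g_j)^2 \leq 2(g_i^2+g_j^2)$ together with the fact that each vertex appears in at most $d_{\max}$ edges. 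An averaging argument over the threshold $s$ then extracts a level set $S_{t^\ast}$ whose cut-ratio $\Phi(S_{t^\ast})$ is at most $\sqrt{2 d_{\max}\lambda_2(\mathcal{L})}$, and squaring delivers $\Phi^2/(2 d_{\max}) \leq \lambda_2(\mathcal{L})$.

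The main obstacle is the lower bound: one has to sweep the function $g^2$ rather than $g$ itself, truncate so that $|\mathrm{supp}(g)| \leq n/2$ so that the level sets produced are admissible, and couple coarea with Cauchy-Schwarz so that $d_{\max}$ enters with exactly the right exponent. A secondary bookkeeping issue is that the paper uses the normalization $\Phi(S) = e(S,S^c)/(|S||S^c|)$ rather than the more standard $e(S,S^c)/\min(|S|,|S^c|)$; the two differ by factors between $n/2$ and $n$, and tracking this carefully is what finally matches the constants in the stated inequality.
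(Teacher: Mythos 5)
The paper offers no proof of this lemma at all---it is imported purely by citation to \cite{godsil2001algebraic}---so the only question is whether your sketch actually delivers the stated inequalities, and for the upper bound it does not. Your test vector $x_i=|S^c|$ on $S$ and $x_i=-|S|$ on $S^c$ gives numerator $n^2\,e(S,S^c)$ and denominator $n\,|S|\,|S^c|$, hence
\begin{align}\nonumber
\lambda_2(\mathcal{L})\;\le\;\frac{n\,e(S,S^c)}{|S|\,|S^c|}\;=\;n\,\Phi(S),
\end{align}
so ``tracking normalizations'' yields $\lambda_2\le n\Phi$, not $2\Phi$. This is not a bookkeeping loss you can recover, because with the normalization $\Phi(S)=e(S,S^c)/(|S||S^c|)$ used in the statement the inequality $\lambda_2(\mathcal{L})\le 2\Phi$ is actually false: for the complete graph $K_n$ one has $\lambda_2(\mathcal{L})=n$ while $\Phi(S)=1$ for every $S$, so $2\Phi=2<n$ once $n\ge 3$. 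The bound $\lambda_2\le 2h$ holds for Mohar's isoperimetric number $h=\min_{|S|\le n/2}e(S,S^c)/|S|$, and $h(S)$ and $\Phi(S)$ differ by the factor $|S^c|\in[n/2,n]$; your closing paragraph correctly notices this discrepancy but asserts that careful tracking ``finally matches the constants in the stated inequality,'' which it cannot. Carried out honestly, your upper-bound computation proves the correct inequality $\lambda_2\le n\Phi$ and thereby refutes the lemma's upper bound as written.

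The lower-bound sketch is sound and standard: truncating the $\lambda_2$-eigenvector to its smaller sign class, sweeping the level sets of $g^2$ via the coarea identity, and applying Cauchy--Schwarz with $(g_i+g_j)^2\le 2(g_i^2+g_j^2)$ and $\sum_{(i,j)\in\mathcal{E}}(g_i^2+g_j^2)\le d_{max}\|g\|^2$ gives $h\le\sqrt{2d_{max}\lambda_2(\mathcal{L})}$; since the extracted level set $S^{\ast}$ satisfies $\Phi\le\Phi(S^{\ast})\le e(S^{\ast},(S^{\ast})^c)/|S^{\ast}|$ (as $|(S^{\ast})^c|\ge 1$), the stated---indeed much weaker---inequality $\Phi^2/(2d_{max})\le\lambda_2(\mathcal{L})$ follows. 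Since the paper only ever invokes the lower-bound direction (to bound $\lambda_2$ of a connected communication graph away from zero), the defect is confined to the upper bound of the lemma and to your claim of having a route to it.
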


\bigskip
\begin{lemma}(\textit{Courant-Fischer Formula \cite{horn2012matrix}})\label{lemma:Courant-Fischer-Formula}
Let $A$ be an $n\times n$ symmetric matrix with eigenvalues $\lambda_1\leq \lambda_2 \leq \ldots,\leq \lambda_n$ and corresponding eigenvectors $v_1,\ldots,v_n$. Moreover, for $1\leq k \leq n$, let $S_k$ denote the span of $v_1,\ldots,v_k$ (with $S_0=\{0\}$), and let $S_k^{^\bot}$ denote the 
orthogonal complement of $S_k$, i.e., $S_k^{^\bot}=\{v\in \mathbb{R}^n| v'u=0, \forall u\in S_k\}$. Then 
\begin{align}\nonumber
\lambda_k=\min_{\substack{\|x\|=1 \\ x\in S_{k-1}^{^\bot}}}x'Ax.
\end{align}
\end{lemma}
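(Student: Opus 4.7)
The plan is to exploit the spectral theorem to diagonalize $A$ in the orthonormal eigenbasis $\{v_1,\ldots,v_n\}$, and then reduce the constrained Rayleigh quotient minimization on $S_{k-1}^\bot$ to an elementary inequality on nonnegative coefficients. The symmetry of $A$ is essential here because it guarantees that we can choose the $v_i$ to be pairwise orthogonal (with unit norm, by normalization), so that orthogonality to $S_{k-1}$ has a clean algebraic description in terms of vanishing coordinates.

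The key steps, in order, are as follows. First, I would invoke the real spectral theorem for symmetric matrices to assert that $\{v_1,\ldots,v_n\}$ can be chosen as an orthonormal basis of $\mathbb{R}^n$ with $Av_i=\lambda_i v_i$. Next, for an arbitrary $x\in \mathbb{R}^n$ with $\|x\|=1$, I would expand $x=\sum_{i=1}^{n} c_i v_i$, and use orthonormality to obtain the identities $\|x\|^2=\sum_{i=1}^{n} c_i^2 = 1$ and $x'Ax=\sum_{i=1}^{n} \lambda_i c_i^2$. Third, I would characterize the constraint set: since $v_1,\ldots,v_{k-1}$ span $S_{k-1}$ and are orthonormal, $x\in S_{k-1}^{\bot}$ is equivalent to $c_1=\cdots=c_{k-1}=0$, so the expansion reduces to $x=\sum_{i=k}^{n} c_i v_i$ with $\sum_{i=k}^{n} c_i^2=1$. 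Then a direct computation gives
\begin{equation}\nonumber
x'Ax \;=\; \sum_{i=k}^{n} \lambda_i c_i^2 \;\geq\; \lambda_k \sum_{i=k}^{n} c_i^2 \;=\; \lambda_k,
\end{equation}
where the inequality uses $\lambda_i\geq \lambda_k$ for $i\geq k$ together with $c_i^2\geq 0$. Finally, I would note that equality is achieved at $x=v_k$, since $v_k\in S_{k-1}^\bot$, $\|v_k\|=1$, and $v_k'Av_k=\lambda_k$. Combining the lower bound with the attained value yields the claimed equality.

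There is no substantial obstacle here; the result is essentially a one-line consequence of the spectral theorem once the correct change of basis has been made. The only subtlety worth flagging is the case of repeated eigenvalues: the eigenvectors $v_1,\ldots,v_n$ are not uniquely determined, but one can always perform Gram-Schmidt within each eigenspace to produce an orthonormal eigenbasis, so $S_{k-1}$ and $S_{k-1}^\bot$ are well-defined (up to choice of basis within eigenspaces, the numerical value $\lambda_k$ of the minimum is unambiguous). Since the paper only cites Horn-Johnson and uses the result as a black-box tool, a brief reference to the spectral decomposition argument above should suffice.
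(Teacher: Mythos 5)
Your proof is correct and complete: the paper itself offers no proof of this lemma, citing it as a standard result from Horn and Johnson, and your spectral-theorem argument (orthonormal eigenbasis, expansion $x=\sum_{i\ge k}c_iv_i$ on the constraint set, the bound $\sum_{i\ge k}\lambda_i c_i^2\ge\lambda_k$, equality at $v_k$) is exactly the textbook derivation being invoked. Your remark about choosing an orthonormal eigenbasis within repeated eigenspaces is the right point to flag, since the identification of $S_{k-1}^{\bot}$ with the vanishing of the first $k-1$ coefficients relies on it.
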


\smallskip
\begin{corollary}[Rayleigh-Quotient \cite{horn2012matrix}]\label{cor:Rayleigh Quotient}
Let $\mathcal{G}=(\mathcal{V},\mathcal{E})$ be a graph and $\mathcal{L}$ be the Laplacian of $\mathcal{G}$. We already know from the Perron-Frobenius lemma (Lemma \ref{lemma:perron-frobenius}) that the smallest eigenvalue is $\lambda_1=0$ with eigenvector $v_1=1$. By the Courant-Fischer Formula, we get
\begin{align}\nonumber
\lambda_2(\mathcal{L})=\min_{\substack{\|x\|=1 \\ x\bot\bold{1}}}x'\mathcal{L}x.
\end{align}
\end{corollary}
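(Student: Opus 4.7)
The plan is to deduce the identity as a direct specialization of the Courant-Fischer formula (Lemma \ref{lemma:Courant-Fischer-Formula}) applied to the symmetric matrix $\mathcal{L}$. Since both the adjacency matrix and the degree matrix of an undirected graph are symmetric, $\mathcal{L}$ is symmetric, so the Courant-Fischer variational characterization applies and yields each eigenvalue as a constrained Rayleigh quotient over the orthogonal complement of the span of the earlier eigenvectors.

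First, I would verify explicitly that the consensus vector $\mathbf{1}$ is an eigenvector of $\mathcal{L}$ associated with the eigenvalue $0$. This is immediate from the definition $\mathcal{L} = \mathcal{A}_{\mathcal{G}} - \mathcal{D}_{\mathcal{G}}$ (read with the appropriate sign convention that makes $\mathcal{L}$ positive semidefinite), because the $i$th row of $\mathcal{A}_\mathcal{G}$ has exactly $\deg(i)$ ones, giving $\mathcal{A}_\mathcal{G}\mathbf{1} = \mathcal{D}_\mathcal{G}\mathbf{1}$ and hence $\mathcal{L}\mathbf{1} = 0$. The Perron-Frobenius lemma (Lemma \ref{lemma:perron-frobenius}) then guarantees, whenever $\mathcal{G}$ is connected, that this eigenvalue has multiplicity one and its eigenspace is the line spanned by $\mathbf{1}$. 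Normalizing, the unit eigenvector associated with $\lambda_1 = 0$ is $v_1 = \mathbf{1}/\|\mathbf{1}\|$.

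Finally, I would invoke Lemma \ref{lemma:Courant-Fischer-Formula} with $k=2$ to obtain $\lambda_2(\mathcal{L}) = \min_{\|x\|=1,\,x\in S_1^\bot} x' \mathcal{L} x$ where $S_1 = \mathrm{span}(v_1) = \mathrm{span}(\mathbf{1})$, so that the constraint $x\in S_1^\bot$ collapses to the condition $x\perp \mathbf{1}$, yielding precisely the claimed formula. There is no substantive obstacle: the only bookkeeping point is making sure the sign convention in the paper's definition of $\mathcal{L}$ is the one under which $\mathcal{L}$ is positive semidefinite so that the minimum is well-defined and attained, and the result is otherwise a one-line consequence of Courant-Fischer combined with the identification of the first eigenpair via Lemma \ref{lemma:perron-frobenius}.
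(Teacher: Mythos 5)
You are correct, and your argument is essentially the one the paper itself gives inline in the corollary: identify the first eigenpair $(\lambda_1,v_1)=(0,\mathbf{1})$ via Lemma~\ref{lemma:perron-frobenius} and then apply Lemma~\ref{lemma:Courant-Fischer-Formula} with $k=2$, so that $S_1^{\bot}$ becomes the constraint $x\bot\mathbf{1}$. Your side remark is also well taken: the paper's definition $\mathcal{L}_{\mathcal{G}}=\mathcal{A}_{\mathcal{G}}-\mathcal{D}_{\mathcal{G}}$ must be read with the opposite sign (i.e., $\mathcal{D}_{\mathcal{G}}-\mathcal{A}_{\mathcal{G}}$) for $\mathcal{L}$ to be positive semidefinite with smallest eigenvalue $0$ and non-positive off-diagonal entries as required by Lemma~\ref{lemma:perron-frobenius}; this is a sign typo and does not affect the argument.
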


\bigskip
\begin{lemma}\label{lemm:convex-hull}
Suppose $C$ is a stochastic matrix and $y=Cx$; then
\begin{align}\nonumber
diam(conv(y))\leq (1-\mu(C)) diam(conv(x))
\end{align}
where $\mu(C)=\min_{i\neq j}(\sum_{k=1}^{n}\min (c_{ik},c_{jk}))$.
In particular, when $C$ is a scrambling matrix with  \ $\min^+C\geq \delta$, then we can say $\mu(C)\ge \delta$, or $diam(conv(y))\leq (1-\delta) diam(conv(x)).$
\end{lemma}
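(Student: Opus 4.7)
The plan is to reduce the diameter bound on $conv(y)$ to a pairwise bound $\|y_i - y_j\|$, since the diameter of the convex hull of finitely many points is attained at a pair of its generating points. Thus it suffices to show that for every pair $i \neq j$,
\begin{align*}
\|y_i - y_j\| \leq (1-\mu(C))\, diam(conv(x)).
\end{align*}
Fix such a pair. I would start from the identity $y_i - y_j = \sum_{k=1}^n (c_{ik} - c_{jk})\, x_k$ and introduce $a_k := \min(c_{ik}, c_{jk})$. Setting $\alpha := \sum_k a_k$, by definition of $\mu(C)$ we have $\alpha \geq \mu(C)$. The key algebraic manipulation is to subtract the common $a_k x_k$ from both $c_{ik}x_k$ and $c_{jk}x_k$, yielding
\begin{align*}
y_i - y_j = \sum_k (c_{ik} - a_k)\, x_k - \sum_k (c_{jk} - a_k)\, x_k,
\end{align*}
where now $c_{ik} - a_k \geq 0$, $c_{jk} - a_k \geq 0$, and each of the two sums of coefficients equals $1 - \alpha$ (since $C$ is row-stochastic).

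Assuming $\alpha < 1$ (the case $\alpha = 1$ forces $y_i = y_j$ and the bound is trivial), I would normalize by $1 - \alpha$ to exhibit
\begin{align*}
y_i - y_j = (1-\alpha)\Bigl(\sum_k \tfrac{c_{ik} - a_k}{1-\alpha}\, x_k \;-\; \sum_k \tfrac{c_{jk} - a_k}{1-\alpha}\, x_k\Bigr),
\end{align*}
so that the two inner sums are convex combinations of the components of $x$; in particular both lie in $conv(x)$. Hence their difference has norm at most $diam(conv(x))$, giving
\begin{align*}
\|y_i - y_j\| \leq (1-\alpha)\, diam(conv(x)) \leq (1-\mu(C))\, diam(conv(x)),
\end{align*}
where the last inequality uses $\alpha \geq \mu(C)$. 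Taking the maximum over $i, j$ finishes the general inequality.

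For the scrambling specialization, I would observe that when $C$ is scrambling, every pair of rows has a column $k$ where both $c_{ik}$ and $c_{jk}$ are positive, so $\min(c_{ik}, c_{jk}) \geq \min^{+} C \geq \delta$ for that $k$, and the remaining nonnegative terms only help; thus $\mu(C) \geq \delta$, and the claimed $(1-\delta)$-contraction follows from the first part. I do not anticipate a serious obstacle: the only subtlety is the degenerate case $\alpha = 1$, which is handled trivially, and a brief justification that the diameter of $conv(y)$ is attained at two components of $y$ (a standard consequence of convexity and the fact that any point in a convex hull is a convex combination of its generators, so the distance function, being convex, is maximized at vertices of the pair).
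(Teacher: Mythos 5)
Your proof is correct and complete. The paper does not actually supply an argument for this lemma --- it simply cites a reference --- and what you have written is precisely the standard contraction-coefficient argument that underlies that citation: subtract the common mass $a_k=\min(c_{ik},c_{jk})$ from the two rows, renormalize the nonnegative remainders into convex combinations lying in $conv(x)$, and handle the degenerate case $\alpha=1$ (identical rows) separately. The reduction of $diam(conv(y))$ to a maximum over pairs of generating points and the scrambling specialization $\mu(C)\ge\delta$ are both handled correctly, so nothing is missing.
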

\begin{proof}
A short proof of the above lemma can be found in \cite{geometric}.
\end{proof}

In fact, one of the fundamental concepts and properties of the synchronous Hegselmann-Krause dynamics that will be used extensively throughout this paper is that the dynamics admit a quadratic Lyapunov function \cite{blondel20072r,TouriNedich:CDC2012,roozbehani2008lyapunov}.

\bigskip
\section{Synchronous Multidimensional Hegselmann-Krause Dynamics}\label{sec:mainresults-one}

In this section we consider the homogeneous synchronous Hegselmann-Krause model as was introduced in \eqref{eq:synchronous-Hegselmann-Krause}. We start our discussion by introducing some notation that will be used throughout this section.

\smallskip
\begin{definition}
We say that a time instance $t$ is a \textit{merging time} for the dynamics if two agents with different opinions move to the same place.
\end{definition}

\smallskip
Based on that definition, we can see that if two agents $i$ and $j$ merge at time instant $t$, then they will have the same opinion at time $t+1$ and onward, while their common opinion may vary with time. Moreover,  prior to the termination time of the dynamics, we cannot have more than $n$ merging times, since there are $n$ agents in the model. In what follows next, we define the notions of termination time and communication graphs.
\smallskip
\begin{definition}\label{def:T_n}
For every set of $n\ge 1$ agents we define the \textit{termination time} $T_n$ of the synchronous Hegselmann-Krause dynamics to be the maximum number of iterations before steady state is reached over all the initial profiles.
\end{definition}

\smallskip
\begin{definition}\label{def:trivial-component-communication-graph}
Given an opinion profile at time $t$, we associate with that opinion profile an undirected graph $\mathcal{G}(t)=([n],\mathcal{E}(t))$ where the edge $(i,j)\in \mathcal{E}(t)$ if and only if $i\in \N_j(t)$. We refer to such a graph as the \textit{communication graph} or \textit{communication topology} of the dynamics at time step $t$. Furthermore, a connected component of the communication graph is called $\delta$-\textit{trivial} for some $\delta>0$, if all the agents in that component lie within a distance of at most $\delta$ from each other.  
\end{definition}

\smallskip
\begin{remark}\label{rem:trivial-component}
From Definition \ref{def:trivial-component-communication-graph}, it is not hard to see that for any $\delta<\epsilon$, a $\delta$-trivial component forms a complete component (clique) in the communication topology of the dynamics. In particular, if there is such a $\delta$-trivial component at some time $t$, then in the next time step, all the agents in that component will merge to the same opinion.  
\end{remark}

\smallskip 
In our earlier work \cite{etesami2013termination}, we were able to analyze the termination time of the Hegselmann-Krause dynamics based on the number of isolated agents throughout the dynamics. 

\bigskip
\begin{theorem}\label{thm:termination-time-singletons}
For the termination time $T_n$ of the synchronous Hegselmann-Krause dynamics in $\mathbb{R}^d$, we have:
\begin{align}\nonumber
\sum_{t=0}^{T_n}(\frac{1}{2})^{|S_0(t)|}<8n^6,
\end{align}
where $S_0(t)$ denotes the set of agents (singletons) at time $t$ who do not observe any opinions other than them inside their neighborhood, i.e., $i\in S_0(t)$ if and only if $\N_i(t)=\{x_i(t)\}$. 
\end{theorem}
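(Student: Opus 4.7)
The plan is to identify a non-negative Lyapunov function $V(t)$ which is non-increasing along the synchronous Hegselmann-Krause dynamics, and to lower-bound its per-step decrement by a quantity proportional to $(1/2)^{|S_0(t)|}$; telescoping then yields the weighted sum bound. A natural candidate is the quadratic variance functional
\[
V(t) \;=\; \sum_{i=1}^n \|x_i(t) - \bar x\|^2,
\]
with $\bar x$ a conveniently chosen reference (the initial centroid, say). Monotonicity of $V$ follows from the fact that each non-singleton agent moves to the centroid of its $\epsilon$-neighborhood (a convex combination of current positions), while singletons do not move at all; this is the standard quadratic Lyapunov structure alluded to at the end of Section~\ref{sec:prelimresults}. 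After rescaling the initial configuration so that its diameter is polynomial in $n$, we have $V(0) = O(n^b)$ for a small constant $b$.

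The crucial technical step is a per-step decrement estimate of the form
\[
V(t) - V(t+1) \;\ge\; c\, n^{-a}\, 2^{-|S_0(t)|},
\]
for appropriate constants $a,c>0$. To establish this, I would decompose the decrement across the connected components of the communication graph $\mathcal{G}(t)$: each nontrivial component contributes a quadratic decrement proportional to the squared displacements of its agents toward the component centroid, which in turn can be controlled via Cheeger's inequality (Lemma~\ref{lemma:cheegers-inequality}) applied to the associated Laplacian. The exponential-in-$|S_0(t)|$ factor is expected to emerge from a combinatorial bookkeeping that tracks how singletons interact with, and gradually merge into, the non-singleton clusters, possibly amortized across several consecutive time steps. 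Once this per-step bound is in place, telescoping gives
\[
c\, n^{-a}\sum_{t=0}^{T_n} 2^{-|S_0(t)|} \;\le\; V(0) \;\le\; C n^b,
\]
so the weighted sum is $O(n^{a+b})$; arranging the constants so that $a+b=6$ and $C/c\le 8$ produces the stated bound.

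The main obstacle is securing the exponential dependence on $|S_0(t)|$ in the per-step decrement. A direct edge-by-edge calculation based only on Lemma~\ref{lemm:convex-hull} gives only $V(t)-V(t+1) \ge \Omega(\epsilon^2 / \mathrm{poly}(n))$ whenever any non-singleton interaction is present, with no exponential factor. Recovering the $2^{-|S_0(t)|}$ weighting likely requires either (i) a multi-step amortization in which one tracks how singletons gradually join clusters over several consecutive updates, so that each "wake-up" event of a singleton unlocks a geometric, rather than arithmetic, share of the remaining potential, or (ii) a refined Lyapunov function whose weights themselves depend on the singleton structure of the current profile. Either route amounts to exploiting the fact that steps with many singletons are rare because each singleton is either permanently isolated (and can be excluded from consideration for the remainder of the dynamics, as in Remark~\ref{rem:trivial-component}) or will eventually join a cluster, absorbing a definite amount of potential when it does.
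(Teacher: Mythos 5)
There is a genuine gap here, on two levels. First, your choice of Lyapunov function does not work. With a fixed reference $\bar x$, the function $V(t)=\sum_{i=1}^n\|x_i(t)-\bar x\|^2$ is \emph{not} non-increasing along the Hegselmann--Krause dynamics: Jensen's inequality gives $\|x_i(t+1)-\bar x\|^2\le\sum_j A_{ij}(t)\|x_j(t)-\bar x\|^2$ agent by agent, but summing over $i$ weights each $\|x_j(t)-\bar x\|^2$ by the $j$th \emph{column} sum of $A(t)$, and $A(t)$ is only row-stochastic (the dynamics do not preserve the opinion average; in a star-shaped neighborhood graph the center's column sum is $\tfrac1n+\tfrac{n-1}{2}>1$, and for an asymmetric star the centroid drifts, so $V$ relative to a fixed reference can increase). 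The function that actually has the required monotonicity is the truncated pairwise potential $V(t)=\sum_{i,j}\min\{\|x_i(t)-x_j(t)\|^2,\epsilon^2\}$ of Lemma~\ref{lemm:Lyapunov-mardavij}, and the truncation is also what rescues your initial-value bound: for your $V$ one only has $V(0)\le n\cdot\mathrm{diam}^2$, and ``rescaling the initial configuration'' buys nothing because the problem is scale-invariant (rescaling the opinions rescales $\epsilon$ identically, so the ratio $\mathrm{diam}/\epsilon$, which can be arbitrarily large, is unchanged), whereas the truncated potential satisfies $V(0)\le n^2\epsilon^2$ regardless of the diameter, which is what lets the final bound be a pure function of $n$.

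Second, and more fundamentally, the one step that makes this theorem nontrivial --- producing the factor $(1/2)^{|S_0(t)|}$ in the per-step decrement --- is exactly the step you do not carry out. You correctly note that a component-wise Laplacian/Cheeger estimate yields only $V(t)-V(t+1)\ge\Omega(\epsilon^2/\mathrm{poly}(n))$ at non-merging, non-trivial steps, with no dependence on the singleton set, and you then offer two speculative repairs (multi-step amortization, or a singleton-dependent reweighting of the potential) without executing either; since singletons do not move and do not belong to any non-trivial component, it is not even clear from your setup why the available decrement should depend on $|S_0(t)|$ at all, which indicates the mechanism behind the weight is of a different nature than a per-step spectral estimate. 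Declaring at the end that one ``arranges the constants so that $a+b=6$ and $C/c\le 8$'' inverts the logic: those constants must come out of the (missing) argument, not be chosen to match the target. Note finally that the paper itself does not prove this statement here --- it imports it verbatim from \cite{etesami2013termination} --- so there is no in-paper argument your sketch could be matched against, and as written the proposal is a program for a proof rather than a proof.
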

\begin{proof}
A proof can be found in \cite{etesami2013termination}.
\end{proof}

\smallskip
As a particular result of Theorem \ref{thm:termination-time-singletons}, if for a particular instance of the dynamics, the agents maintain the connectivity throughout the dynamics, we conclude that $T_n=O(n^6)$. In fact, Theorem \ref{thm:termination-time-singletons} gives us the idea that the termination time of the dynamics greatly depends on the connectivity of the underlying graph and should be independent of the dimension of the opinions ($d$). In this paper, we resolve that problem and show that indeed, the termination time is independent of the dimension. That answers one of the open questions raised in $\cite{bhattacharyya2013convergence}$ related to the existence of a tighter polynomial upper bound independent of the dimension of the opinion space. For that purpose, we utilize a quadratic Lyapunov function that was introduced earlier in \cite{roozbehani2008lyapunov}. 
\smallskip
\begin{lemma}\label{lemm:Lyapunov-mardavij}
Let $V(t)\!=\!\sum_{i,j\in [n]}\min\{\|x_i(t)\!-\!x_j(t)\|^2,\epsilon^2\}$. Then $V$ is non-increasing along the trajectory of the synchronous Hegselmann-Krause dynamics. In particular, we have
\begin{align}\nonumber
V(t)-V(t+1)\ge 4\sum_{\ell=1}^{n}\|x_{\ell}(t+1)-x_{\ell}(t)\|^2.
\end{align}
\end{lemma}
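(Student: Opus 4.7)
The plan is to drop the minimum in $V(t+1)$ using the edge set of the communication graph at time $t$, rewrite the resulting edge difference as a change in a Laplacian quadratic form, and then close the argument with a self-loop-tightened spectral bound.

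First, for every pair $(i,j)$ with $j\in\mathcal{N}_i(t)$ I would use $\min\{\|x_i(t+1)-x_j(t+1)\|^2,\epsilon^2\}\le\|x_i(t+1)-x_j(t+1)\|^2$, while for every other pair this min is at most $\epsilon^2$, which equals that pair's contribution to $V(t)$. Cancelling the $\epsilon^2$ terms leaves
\[
V(t)-V(t+1)\ \ge\ \sum_i\sum_{j\in\mathcal{N}_i(t)}\bigl(\|x_i(t)-x_j(t)\|^2-\|x_i(t+1)-x_j(t+1)\|^2\bigr).
\]
Let $W$ be the $0/1$ neighbor matrix at time $t$ (so $W_{ii}=1$), $D=\mathrm{diag}(|\mathcal{N}_i(t)|)$, and $L=D-W$. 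Applying the standard identity $\sum_{ij}W_{ij}\|v_i-v_j\|^2=2\,\mathrm{tr}(v^TLv)$ column-wise in the ambient dimension, the right-hand side above equals $2\,\mathrm{tr}(x(t)^TLx(t))-2\,\mathrm{tr}(x(t+1)^TLx(t+1))$.

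Writing $y=x(t+1)-x(t)$, the synchronous update $x(t+1)=D^{-1}Wx(t)$ is equivalent to $Dy=-Lx(t)$. Substituting $x(t+1)=x(t)+y$ and using $Lx(t)=-Dy$ yields
\[
2\,\mathrm{tr}(x^TLx)-2\,\mathrm{tr}((x+y)^TL(x+y))=-4\,\mathrm{tr}(x^TLy)-2\,\mathrm{tr}(y^TLy)=4\,\mathrm{tr}(y^TDy)-2\,\mathrm{tr}(y^TLy).
\]
To finish, I would prove the spectral inequality $L\preceq 2(D-I)$, equivalently $D+W\succeq 2I$, from the signless-Laplacian identity
\[
v^T(D+W)v=\tfrac{1}{2}\sum_{i,j}W_{ij}(v_i+v_j)^2.
\]
Because the HK neighborhood is reflexive ($W_{ii}=1$ for every $i$), the diagonal $i=j$ terms alone contribute $\tfrac12\sum_i(2v_i)^2=2\|v\|^2$, and the remaining off-diagonal terms are nonnegative. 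Hence $\mathrm{tr}(y^TLy)\le 2\,\mathrm{tr}(y^T(D-I)y)$, and combining gives
\[
V(t)-V(t+1)\ge 4\sum_i|\mathcal{N}_i(t)|\,\|y_i\|^2-4\sum_i(|\mathcal{N}_i(t)|-1)\|y_i\|^2=4\sum_{\ell=1}^n\|y_\ell\|^2,
\]
which delivers both the quantitative bound and monotonicity of $V$ as a by-product.

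The main obstacle is this last spectral step. The textbook estimate $L\preceq 2D$ for an arbitrary graph Laplacian would only yield $V(t)-V(t+1)\ge 0$, losing exactly the $\|y_\ell\|^2$ slack that the statement demands. What rescues the quantitative dissipation is the reflexivity of the HK neighborhood relation: each self-loop contributes precisely the $2\|v\|^2$ needed to upgrade $L\preceq 2D$ into $L\preceq 2(D-I)$, and this single observation is where the constant $4$ in the conclusion ultimately comes from.
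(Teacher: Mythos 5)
Your proof is correct. The paper itself does not prove this lemma --- it only cites \cite{roozbehani2008lyapunov} --- so there is no in-paper argument to compare against, but your derivation is a valid, self-contained proof and follows the standard route: drop the $\min$ edge-by-edge using the time-$t$ communication graph, reduce the edge difference to $4\,\mathrm{tr}(y'Dy)-2\,\mathrm{tr}(y'Ly)$ via $Dy=-Lx(t)$, and then control $\mathrm{tr}(y'Ly)$. Each step checks out, including the one place where care is needed: the crude bound $L\preceq 2D$ would only give monotonicity, and your observation that the reflexive self-loops upgrade it to $L\preceq 2(D-I)$ (equivalently $D+W\succeq 2I$ from the signless-Laplacian identity, or, just as directly, from $\|y_i-y_j\|^2\le 2\|y_i\|^2+2\|y_j\|^2$ summed over off-diagonal edges only) is exactly what produces the constant $4$ in the claimed dissipation.
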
 
\begin{proof}
A proof can be found in \cite{roozbehani2008lyapunov}.
\end{proof}  
\smallskip

In the following theorem, we provide a lower bound for the amount of decrease of the above Lyapunov function as long as there exists one non-$\epsilon$-trivial component in the dynamics. 

\bigskip 
\begin{theorem}
The termination time of the synchronous Hegselmann-Krause dynamics in arbitrary finite dimensions is independent of the dimension and is bounded from above by $T_n\leq n^8+n$.
\end{theorem}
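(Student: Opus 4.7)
The plan is to use the quadratic Lyapunov function $V$ of Lemma~5 together with Cheeger's inequality (Lemma~3) and the Rayleigh quotient (Corollary~1) to extract a uniform lower bound on $V(t)-V(t+1)$ whenever some connected component of the communication graph $\mathcal{G}(t)$ is not $\epsilon$-trivial. Because each summand of $V$ is truncated at $\epsilon^2$, the bound $V(t)\leq n^2\epsilon^2$ holds at every step, with no dependence on the ambient dimension. Hence a per-step drop of order $\epsilon^2/n^{6}$ at every non-trivial time $t$ would cap the number of such ``active'' iterations at $O(n^8)$. Once every component is $\epsilon$-trivial, Remark~2 implies that each one is a clique that collapses to its centroid in a single step, and at most $n$ further iterations can possibly occur before complete termination, since every subsequent merging event strictly decreases the number of distinct cluster positions, of which there are at most $n$. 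The two contributions combine to give the claimed $n^8+n$ estimate.

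For the key per-step drop, fix $t$ and a component $C$ of $\mathcal{G}(t)$ with $\mathrm{diam}(C)>\epsilon$. Picking $a,b\in C$ with $\|x_a-x_b\|>\epsilon$ forces at least one of them to lie at distance $\geq\epsilon/2$ from the component centroid $\bar{x}_C$, so that $\sum_{\ell\in C}\|x_\ell-\bar{x}_C\|^2\geq\epsilon^2/4$. Cheeger's inequality applied to the Laplacian $\mathcal{L}_C$ of the induced connected subgraph yields $\lambda_2(\mathcal{L}_C)\geq 8/n^{5}$, using the connectivity bound $\Phi(C)\geq 4/n^2$ (as $e(S,S^c)\geq 1$ and $|S||S^c|\leq n^2/4$) together with $d_{\max}\leq n$. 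Applying Corollary~1 coordinate-by-coordinate to the centered configuration on $C$ and summing over the $d$ opinion coordinates then produces the dimension-free Dirichlet bound
\[
\sum_{(i,j)\in E(C)}\|x_i-x_j\|^2 \;\geq\; \lambda_2(\mathcal{L}_C)\sum_{\ell\in C}\|x_\ell-\bar{x}_C\|^2 \;\geq\; \frac{2\epsilon^2}{n^{5}}.
\]
It remains to convert this Dirichlet energy into a lower bound on $\sum_\ell\|x_\ell(t+1)-x_\ell(t)\|^2$, which directly controls $V(t)-V(t+1)$ through Lemma~5, by way of the matrix identity $\Delta X=-(I+\mathcal{D}_\mathcal{G})^{-1}\mathcal{L}_\mathcal{G}X$ implied by the update rule.

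The main obstacle is precisely this final conversion. A naive bound that replaces $(I+\mathcal{D}_\mathcal{G})^{-1}$ by $I/n$ and then applies the Rayleigh quotient to $\mathcal{L}_\mathcal{G}^2$ squares the spectral gap and yields only $\sum_\ell\|\Delta x_\ell\|^2\geq (\lambda_2(\mathcal{L}_C)^2/n^2)\sum_{\ell\in C}\|x_\ell-\bar{x}_C\|^2$, which with $\lambda_2(\mathcal{L}_C)\geq 8/n^5$ produces only an $O(n^{14})$ bound. To recover the sharp $O(n^8)$ exponent one must retain just a single factor of the spectral gap in the lower bound on $\sum_\ell\|\Delta x_\ell\|^2$; the natural route is to regard the averaging operator $P=(I+\mathcal{D}_\mathcal{G})^{-1}(I+\mathcal{A}_\mathcal{G})$ as a reversible Markov chain on $C$ with stationary weights proportional to $d_\ell+1$, whose Dirichlet form is proportional to $\sum_E(y_i-y_j)^2$ and whose spectral gap is $\lambda_2(\mathcal{L}_C)$ up to a single degree factor. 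That comparison promotes the $2\epsilon^2/n^5$ Dirichlet lower bound to the target $\Omega(\epsilon^2/n^6)$ drop in Lemma~5, terminates the active phase in $O(n^8)$ iterations, and delivers the dimension-free improvement of $\cite{bhattacharyya2013convergence}$ announced in the theorem.
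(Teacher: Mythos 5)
Your overall architecture matches the paper's: the truncated quadratic Lyapunov function with $V(0)\leq n^2\epsilon^2$, the variance lower bound $\sum_{\ell}\|x_\ell-\bar{x}\|^2\geq\epsilon^2/4$ for a non-$\epsilon$-trivial component, a spectral lower bound for the Laplacian of the connected communication graph, and the accounting of at most $n$ merging times. The genuine gap is in your final conversion step, and the remedy you propose cannot work. The per-step displacement is $\sum_\ell\|\Delta x_\ell\|^2=\|D^{-1}\mathcal{L}(t)x(t)\|_F^2$ with $D=I+\mathcal{D}_{\mathcal{G}}$, and this quantity is \emph{intrinsically quadratic} in the spectral gap: if a column of the centered profile is proportional to the Fiedler vector $v_2$ of $\mathcal{L}$, then $\|D^{-1}\mathcal{L}v_2\|\leq\|\mathcal{L}v_2\|=\lambda_2(\mathcal{L})\|v_2\|$, so no reversible-chain or Dirichlet-form comparison can yield $\sum_\ell\|\Delta x_\ell\|^2\gtrsim\lambda_2(\mathcal{L})\sum_\ell\|x_\ell-\bar x\|^2$ with only a single factor of the gap. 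Equivalently, the quadratic Lyapunov decrease lemma is driven by $\|(I-P)y\|^2$, which is second order in $I-P$, whereas the Dirichlet energy $y'\mathcal{L}y$ you compute is first order; passing from the latter to the former necessarily reintroduces a factor of $\lambda_2$, and you land back at the ``naive'' $\lambda_2^2(\mathcal{L})/n^2$ bound you were trying to circumvent.

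The correct repair --- and what the paper actually does --- is to keep the squared gap but use a much stronger lower bound on $\lambda_2(\mathcal{L})$ than the quadratic Cheeger estimate supplies. For a connected graph on $n$ vertices one has Fiedler's bound $\lambda_2(\mathcal{L})\geq 2(1-\cos(\pi/n))\geq 2/n^2$ (the path graph is extremal); this is the constant $2/n^2$ the paper uses (its attribution to Cheeger's inequality is loose, since $\lambda_2\geq\Phi^2/(2d_{\max})$ only gives the $8/n^5$ you derived, and that weaker estimate is the sole reason your exponent degrades to $14$ after squaring). With $\lambda_2(\mathcal{L})\geq 2/n^2$ one gets $\lambda_2(Q(t))=\lambda_2\bigl(\mathcal{L}(t)D^{-2}\mathcal{L}(t)\bigr)\geq\lambda_2^2(\mathcal{L}(t))/n^2\geq 4/n^6$, hence $\sum_\ell\|\Delta x_\ell\|^2\geq(4/n^6)(\epsilon^2/4)=\epsilon^2/n^6$ at every non-merging, non-equilibrium step, and the $n^8+n$ count follows. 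The remaining ingredients of your write-up (the $\epsilon^2/4$ variance bound, the dimension-free bookkeeping via the coordinate decomposition, and the $+n$ for merging times) are sound and agree with the paper's argument.
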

\begin{proof}
Let us assume that the opinion profile $x(t)=(x_1(t),x_2(t), \ldots, x_n(t))'$ is not an equilibrium point of the dynamics and that time $t$ is not a merging time. Therefore, without loss of generality, we may assume that the communication graph at time $t$ is connected with a non-$\epsilon$-trivial component; otherwise, we can restrict ourselves to one of the non-$\epsilon$-trivial components. (Note that such a non-$\epsilon$-trivial component exists, because of Remark \ref{rem:trivial-component} and the fact that $t$ is not a merging time.) By projecting each individual column of $x(t)$ to the consensus vector $\bold{1}$ we can write
\begin{align}\label{eq:consensus-projection}
x(t)=\Big[c_1\bold{1}|c_2\bold{1}|\ldots|c_d\bold{1}\Big]+\Big[\bar{c}_1\bold{1}^{\!\!\!\!^{\bot (1)}}|\bar{c}_2\bold{1}^{\!\!\!\!^{\bot (2)}}|\ldots|\bar{c}_d\bold{1}^{\!\!\!\!^{\bot (d)}}\Big],
\end{align}
where $\bold{1}^{\!\!\!\!^{\bot(k)}}, k=1,\ldots,d$ are column vectors of unit size that are orthogonal to the consensus vector, i.e., $\bold{1}'\bold{1}^{\!\!\!\!^{\bot (k)}}=0$, and $c_k,\bar{c}_k, k=1,\ldots,d$ are coefficients of projection of the $k$th column of $x(t)$ on $\bold{1}$ and $\bold{1}^{\!\!\!\!^{\bot(k)}}$, respectively. 

Now we claim that $\sum_{k=1}^{d}\bar{c}_k^2>\frac{\epsilon^2}{4}$. Otherwise, we show that every two agents $x_i(t)$ and $x_j(t)$ must lie within a distance of at most $\epsilon$ from each other, which is in contrast with the assumption that the component is a non-$\epsilon$-trivial component. In fact, if $\sum_{k=1}^{d}\bar{c}_k^2\leq\frac{\epsilon^2}{4}$, we can write,
\begin{align}\label{eq:c-bar-epsilon-trivial}
\|x_i(t)-x_j(t)\|^2&=\sum_{k=1}^{d}\bar{c}^2_k\big(\bold{1}_i^{\!\!\!\!^{\bot(k)}}-\bold{1}_j^{\!\!\!\!^{\bot(k)}}\big)^2\cr 
&\leq  2\sum_{k=1}^{d}\bar{c}^2_k\big((\bold{1}_i^{\!\!\!\!^{\bot(k)}})^2+(\bold{1}_j^{\!\!\!\!^{\bot(k)}})^2\big) \cr &\leq 2\sum_{k=1}^{d}\bar{c}^2_k\big(\|\bold{1}^{\!\!\!\!^{\bot(k)}}\|^2+\|\bold{1}^{\!\!\!\!^{\bot(k)}}\|^2\big)\cr 
&=4\sum_{k=1}^{d}\bar{c}^2_k\leq\epsilon^2, 
\end{align} 
where the first equality is due to the decomposition given in \eqref{eq:consensus-projection} and the second equality is valid since the vectors $\bold{1}^{\!\!\!\!^{\bot(k)}}, k=1\ldots,d$, are of unit size. The contradiction shows that $\sum_{k=1}^{d}\bar{c}_k^2>\frac{\epsilon^2}{4}$. 

Next, we notice that $x(t+1)=A(t)x(t)$, where $A(t)$ is the stochastic matrix defined in \eqref{eq:synchronous-Hegselmann-Krause}. Using \eqref{eq:consensus-projection} we can write,
\begin{align}\label{eq:x(t)-x(t+1)-decomposition}
x(t)&-x(t+1)=(I-A(t))x(t)\cr 
&\qquad =\Big[\bar{c}_1(I-A(t))\bold{1}^{\!\!\!\!^{\bot (1)}}|\ldots|\bar{c}_d(I-A(t))\bold{1}^{\!\!\!\!^{\bot (d)}}\Big],
\end{align}
where the equality holds since $\bold{1}$ belongs to the null space of $I-A(t)$. In particular, we have,
\begin{align}\label{eq:decrease-sum-quadratics}
\sum_{\ell=1}^{n}\|x_{\ell}(t)-x_{\ell}(t\!+\!1)\|^2&=\sum_{\ell=1}^{n}\sum_{k=1}^{d}\big(x_{\ell k}(t)-x_{\ell k}(t\!+\!1)\big)^2\cr 
&=\sum_{k=1}^{d}\Big(\sum_{\ell=1}^{n}\big(x_{\ell k}(t)-x_{\ell k}(t\!+\!1)\big)^2\Big)\cr &=\sum_{k=1}^{d}\bar{c}^2_k\|(I-A(t))\bold{1}^{\!\!\!\!^{\bot (k)}}\|^2,
\end{align}
where in the last equality we have used \eqref{eq:x(t)-x(t+1)-decomposition}.
Let us assume that $Q(t)=(I-A(t))'(I-A(t))$. It is not hard to see that $Q(t)$ is a positive semidefinite matrix. Moreover, 0 is an eigenvalue of $Q$ with multiplicity one, corresponding to the eigenvector $\bold{1}$. To see that, let us assume that there exists another vector $v$, such that $Q(t)v=0$. Multiplying that equality from the left by $v'$, we get $\|(I-A(t))v\|^2=0$, and hence $(I-A(t))v=0$. Since by the Perron-Frobenius lemma (Lemma \ref{lemma:perron-frobenius}), $\bold{1}$ is the only unit eigenvector of $I-A(t)$ corresponding to eigenvalue 0, we conclude that $v=\alpha\bold{1}$ for some $\alpha\in \mathbb{R}$. In other words, $\bold{1}$ is the only unit eigenvector of $Q(t)$ corresponding to eigenvalue 0. Moreover, $Q(t)$ is a symmetric real-valued matrix and, hence, diagonalizable, where $\bold{1}$ is its only eigenvector corresponding to eigenvalue 0. That shows that the multiplicity of the eigenvalue 0 in $Q(t)$ is exactly one. 

\smallskip
Let us use $\lambda_2(Q(t))$ to denote the second smallest eigenvalue of $Q(t)$. By the above argument, it must be strictly positive. Using the Courant-Fischer lemma (Lemma \ref{lemma:Courant-Fischer-Formula}), we get $\lambda_2(Q(t))=\min_{\|y\|=1,y\bot \bold{1}}y'Q(t)y$. Now for every $k=1,\ldots,d$, we can write
\begin{align}\label{eq:estimate-summands-lambda_2(Q)}
\|(I-A(t))\bold{1}^{\!\!\!\!^{\bot (k)}}\|^2&=(\bold{1}^{\!\!\!\!^{\bot (k)}})'(I-A(t))'(I-A(t))\bold{1}^{\!\!\!\!^{\bot (k)}}\cr
&=(\bold{1}^{\!\!\!\!^{\bot (k)}})'Q(t)\bold{1}^{\!\!\!\!^{\bot (k)}}\ge\min_{\substack{\|y\|=1 \\ y\bot \bold{1}}}y'Q(t)y\cr 
&=\lambda_2(Q(t)), 
\end{align}
where the inequality holds, since $\bold{1}'\bold{1}^{\!\!\!\!^{\bot (k)}}=0$ and $\|\bold{1}^{\!\!\!\!^{\bot (k)}}\|=1$. 
Substituting \eqref{eq:estimate-summands-lambda_2(Q)} in \eqref{eq:decrease-sum-quadratics} we get
\begin{align}\label{eq:decrease-lambda_2(Q)} 
\sum_{\ell=1}^{n}\|x_{\ell}(t)-x_{\ell}(t+1)\|^2\ge \sum_{k=1}^{d}\lambda_2(Q(t))\bar{c}^2_k\ge \lambda_2(Q(t))\frac{\epsilon^2}{4}.
\end{align}

Henceforth, we bound $\lambda_2(Q(t))$ from below based on a function of $n$. For that purpose, let us assume that $D(t)=diag\big(1+d_1(t),1+d_2(t),\ldots,1+d_n(t)\big)$, i.e., $D(t)$ is a diagonal matrix with $D_{kk}(t)=1+d_k(t), k\in [n]$. Moreover, let $\mathcal{L}(t)$ denote the Laplacian matrix of the communication graph at time step $t$. By entry wise comparison of both sides, it is not hard to see that $I-A(t)=D(t)^{-1}\mathcal{L}(t)$. Now we can write,
\begin{align}\label{eq:lambda-L2-L}
\lambda_2(Q(t))&=\lambda_2((D(t)^{-1}\mathcal{L}(t))'(D(t)^{-1}\mathcal{L}(t)))\cr 
&=\lambda_2(\mathcal{L}(t)D(t)^{-2}\mathcal{L}(t)),
\end{align} 
where the last equality is due to the fact that $\mathcal{L}(t)$ and $D(t)$ are both symmetric matrices. Next, using the same argument as above, we notice that since $\mathcal{L}(t)D(t)^{-2}\mathcal{L}(t)$ is a symmetric and real-valued matrix, it is diagonalizable, and its zero eigenvalue corresponding to eigenvector $\bold{1}$ has multiplicity one. To see that, let us assume that there is another vector $u$ such that $\mathcal{L}(t)D(t)^{-2}\mathcal{L}(t)u=0$; then, we must have,
\begin{align}\nonumber
0=u'\mathcal{L}(t)D(t)^{-2}\mathcal{L}(t)u=\sum_{i=1}^{n}(\frac{1}{1+d_i(t)})^2(\mathcal{L}(t)u)^2_i,
\end{align} 
which results in $\mathcal{L}(t)u=0$, or, equivalently, $u$ is a scalar multiple of the consensus vector $\bold{1}$. 

\smallskip
Now, using the Courant-Fischer lemma, we can write, 
\begin{align}\label{eq:estimate-L2-I} 
\!\!\!\lambda_2\big(\mathcal{L}(t)D(t)^{-2}\mathcal{L}(t)\big)&\!=\!\min_{\substack{\|y\|=1 \\y\bot \bold{1}}}y'\mathcal{L}(t)D(t)^{-2}\mathcal{L}(t)y\cr 
&\!\ge\! \min_{\substack{\|y\|=1\\ y\bot \bold{1}}}y'\mathcal{L}(t)(\frac{1}{n^2}I)\mathcal{L}(t)y\cr 
&\!=\!\lambda_2\Big(\mathcal{L}(t)(\frac{1}{n^2}I)\mathcal{L}(t)\Big)\cr 
&\!=\!\frac{1}{n^2}\lambda_2\big(\mathcal{L}^2(t)\big)\!=\!\frac{1}{n^2}\lambda_2^2\big(\mathcal{L}(t)\big), 
\end{align}
where the last equality is due to the fact that $\mathcal{L}$ is diagonalizable (it is a symmetric and real-valued matrix) with an eigenvalue 0 of multiplicity 1. 
Substituting \eqref{eq:estimate-L2-I} in \eqref{eq:lambda-L2-L} we get $\lambda_2(Q(t))\ge \frac{1}{n^2}\lambda_2^2\big(\mathcal{L}(t)\big)$. Now, using Cheeger's Inequality (Lemma \ref{lemma:cheegers-inequality}) and since $\mathcal{L}(t)$ is the Laplacian of a connected graph, we can bound $\lambda_2\big(\mathcal{L}(t)\big)$ from below by $\frac{2}{n^2}$, which is due to the isoperimetric number of the communication graph for the minimum cut set. Putting it all together, we have, 
\begin{align}\label{eq:lowerbound-lambda-Q}
\lambda_2(Q(t))\ge \frac{1}{n^2}\lambda_2^2\big(\mathcal{L}(t)\big)\ge \frac{4}{n^6}.  
\end{align}
Finally, combining \eqref{eq:lowerbound-lambda-Q} with \eqref{eq:decrease-lambda_2(Q)}, we conclude that the amount of decrease in the quadratic Lyapunov function if there is a non-$\epsilon$-trivial component is at least $\frac{\epsilon^2}{n^6}$. In other words, if $t$ is not a merging time, we have $V(t)-V(t+1)\ge \frac{\epsilon^2}{n^6}$. Since by definition $V(\cdot)$ is always a nonnegative quantity  with $V(0)\leq \epsilon ^2 n^2$ and the number of merging times can be at most $n$, we conclude that the termination time is bounded from above by $n^8+n$. 
\end{proof}

\bigskip
\section{Asynchronous Hegselmann-Krause Dynamics}\label{sec:mainresults-two}

\bigskip
In this section, we consider the asynchronous Hegselmann-Krause dynamics as introduced in Section \ref{sec:HKdyn}. We first notice that such dynamics do not necessarily reach their steady state in finite time. The simplest case one can consider is when there are only two agents on the real line, separated by a distance less than the confidence bound $\epsilon$. In such a case, no matter what the order of the updating process is, the agents will never arrive at the same opinion or disappear from each other's neighborhood. The two agents will get closer and closer and asymptotically converge to some steady state. That justifies asymptotic analysis of the asynchronous Hegselmann-Krause dynamics, which we will consider in this section. 

\smallskip
In fact, one can easily show that unless the dynamics start from a steady state, it will never reach its steady state in finite time for any asynchronous updating scheme. The reason is that unless the dynamics start from a steady state, at any time instant $t$, there are at least two agents $i$ and $j$ who are connected ($j\in \N_i(t)$), and updating any of them does not bring them to the same opinion. Furthermore, unlike the synchronous case in one dimension, where the order of agents' opinions is preserved throughout the dynamics, in the asynchronous case, the order of the agents' opinions may or may not change, depending on the updating scheme. In this section, we consider a uniformly randomized updating scheme for the agents and analyze the asymptotic convergence of such dynamics to their steady state. But before we start, we need the following two definitions. 

\smallskip
\begin{definition}
We call an updating process a \textit{uniform} updating scheme for the asynchronous Hegselmann-Krause mode if at each time instant $t=0,1,\ldots$, only one agent is chosen independently and with probability $\frac{1}{n}$ from the set of all agents $[n]$ and updates its opinion.   
\end{definition}

\smallskip
\begin{definition}\label{def:delta-equilibrium}
Given a $\delta>0$, we say that an opinion profile $x(t)$ is a $\delta$-equilibrium if the set of agents partition into different sets (clusters) $\{C_1,C_2,\ldots,C_{m}\}$ for some $m\in \mathbb{N}$ such that $dist\big(conv(C_i),conv(C_j))>\epsilon, \forall i\neq j$ and $diam(conv(C_{k}))<\delta, \forall k=1\ldots m$.  
\end{definition}

\smallskip
In fact, Definition \ref{def:delta-equilibrium} simply states that a profile $x(t)$ is a $\delta$-equilibrium if the opinions of agents at time $t$ form some small groups of diameter at most $\delta$ that are far from each other by a distance of at least $\epsilon$. Next, we introduce a network formation game that can explain the behavior of the agents in asynchronous Hegselmann-Krause dynamics.

\bigskip
\subsection{Network Formation Game}

Let us consider a set of $n$ road constructors (players) in $\mathbb{R}^d$ who are funded by the government to construct roads. The budget that the government allocates to each player at the beginning is a fixed amount and is equal to $\$(n-1)\epsilon^2$ ($\$\epsilon^2$ support for each possible road that one player can construct). Ideally, the government would like for all the possible ${n \choose 2}$ roads to be constructed by the players. To that end and in order to create an incentive for players to build as many roads as they can, the government will punish each player by $\$\epsilon^2$ if he or she decides not to construct a road (i.e., the government will take that player's supporting $\$\epsilon^2$ back). On the other hand, each player has the ability to construct roads only within an $\epsilon^2$-neighborhood of himself or herself. (One can assume that the players do not take risks and do not want to spend money beyond the support they received from the government per road.) In such a game, players act myopically, trying to build roads with those who are most beneficial to them. If two players who are located at $x,y\in \mathbb{R}^d$ build a road together, the cost to them is naturally proportional to their distance from each other and is equal to $\|x-y\|^2$. (The farther the players are from each other, the more costly to make a road.) Therefore, in that setting, the payoff for the $i$th player, $i\in [n]$, at location $x_i$ can be formulated as 
\begin{align}\label{eq:payoff}
U_i(x_i,x_{-i})=(n-1)\epsilon^2-\sum_{j=1}^{n}\min\{\|x_i-x_j\|^2,\epsilon^2\},
\end{align}
where $x_{-i}$ denotes the actions of all players except the $i$th one. In such a game, we assume that agents act rationally and are able to compute and play their best response at time steps $t=0,1,2,\ldots$. Based on the above scenario, we have the following lemma. 

\smallskip
\begin{lemma}\label{lemm:best-response}
The sequence of the players' best responses in the network formation game under some specific updating scheme is equivalent to the evolution of the asynchronous Hegselmann-Krause dynamics under the same updating scheme.      
\end{lemma}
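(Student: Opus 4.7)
The plan is to compute agent $i^*$'s best response at time $t$ and verify it coincides with the centroid $\bar x = \frac{1}{|\mathcal{N}_{i^*}(t)|}\sum_{j\in \mathcal{N}_{i^*}(t)} x_j(t)$ from the asynchronous update in \eqref{eq:asynchronous-Hegselmann-Krause}; since the specified updating scheme only dictates which agent acts at each step, iterating this identification over $t$ will then yield the full trajectory-level equivalence.

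With $x_{-i^*}(t)$ held fixed, maximizing $U_{i^*}(y, x_{-i^*}(t))$ in $y$ reduces to minimizing $C_{i^*}(y) = \sum_{j=1}^n \min\!\bigl\{\|y-x_j(t)\|^2,\epsilon^2\bigr\}$, where the $j=i^*$ summand is read with $x_{i^*}(t)$ frozen at the pre-update location so that it acts as a self-attraction penalty on relocation. On the region where the neighborhood is preserved --- i.e.\ where $\mathcal{N}_{i^*}(y)=\mathcal{N}_{i^*}(t)$ --- the cost collapses to the strictly convex quadratic $\sum_{j\in \mathcal{N}_{i^*}(t)} \|y-x_j(t)\|^2 + (n-|\mathcal{N}_{i^*}(t)|)\epsilon^2$, whose unique unconstrained minimizer is $\bar x$. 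Feasibility is automatic: $\bar x$ is a convex combination of $\{x_j(t): j\in\mathcal{N}_{i^*}(t)\}\subseteq B(x_{i^*}(t),\epsilon)$, so $\bar x$ itself lies in this $\epsilon$-ball about $x_{i^*}(t)$ and no current link is severed by the move.

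The subtle step --- and the one I expect to require the most care --- is ruling out profitable \emph{long jumps}: deviations $y$ that abandon the current neighborhood in order to merge with a distant cluster. I would handle these by exhibiting the symmetric exact potential $\Phi(x) = -\tfrac12 \sum_{i,j}\min\{\|x_i-x_j\|^2,\epsilon^2\}$ for the game (its potential property follows from a symmetric cancellation when differencing $U_{i^*}(y,x_{-i^*})-U_{i^*}(x_{i^*}(t),x_{-i^*})$) and comparing $\Phi$-values at $\bar x$ versus at any candidate long-jump $y$. Each newly-attracted agent can lower $C_{i^*}$ by at most $\epsilon^2$, but this gain is offset by the self-attraction term (itself capped at $\epsilon^2$) together with the loss of the small quadratic contributions that the original neighbors provided at $\bar x$; a short accounting then shows that no long jump strictly improves upon $\bar x$, so $\bar x$ is the unique best response and the stepwise equivalence with the asynchronous Hegselmann-Krause trajectory follows.
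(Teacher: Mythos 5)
Your first two paragraphs reproduce the computation that is, in fact, the paper's entire proof: on the set of deviations that keep the neighborhood fixed, the cost collapses to $\sum_{j\in\N_{i^*}(t)}\|y-x_j(t)\|^2$ plus a constant, minimized at the centroid $\bar x$. (The paper writes this as a one-line chain of $\argmin$'s and, like you, reads the $j=i^*$ summand with $x_{i^*}(t)$ frozen; that convention is what makes the answer the average over $\N_{i^*}(t)$ rather than over $\N_{i^*}(t)\setminus\{i^*\}$.) One local error: your feasibility claim is wrong. $\bar x\in B(x_{i^*}(t),\epsilon)$ does not imply $\|\bar x-x_j(t)\|\le\epsilon$ for all $j\in\N_{i^*}(t)$ (take $x_{i^*}=0$, one neighbor at $+\epsilon$ and several neighbors at $-\epsilon$ on a line), so the move to $\bar x$ can sever current links and $\bar x$ need not lie in the region over which you computed the minimum.

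The genuine gap is in your third paragraph: the ``short accounting'' you promise does not exist, because the claim it is meant to establish is false for the \emph{global} best response. Concretely, place $x_{i^*}$ and one other agent at distance $\epsilon$ from each other and the remaining $n-2$ agents coincident at a far-away point $z$. Then $C_{i^*}(\bar x)=(n-2)\epsilon^2+\tfrac{1}{2}\epsilon^2$ while $C_{i^*}(z)=2\epsilon^2$, so for $n\ge 4$ the payoff-maximizing deviation is the long jump to $z$, not the Hegselmann-Krause update. Your heuristic that each newly attracted agent saves at most $\epsilon^2$ is true but cuts the wrong way: there may be $n-2$ such agents against only $|\N_{i^*}(t)|$ old neighbors to give up, so the gain scales with the size of the distant cluster. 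Nor can the exact potential $\Phi$ rescue the argument: $\Phi$ and $U_{i^*}$ differ by a function of $x_{-i^*}$ alone, hence they rank $i^*$'s deviations identically, and any long jump profitable for $U_{i^*}$ raises $\Phi$ by the same amount. The paper's own proof silently makes the identical leap (its second $\argmin$ equality is precisely the step you are trying to justify); the lemma is tenable only if ``best response'' is read as the myopic, neighborhood-restricted best response, and under that reading your first paragraph already completes the proof while the third should be deleted rather than repaired.
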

\smallskip
\begin{proof}
Let us assume that at time step $t$ the $i$th agent updates his location in order to increase his payoff. If the current locations of the players are denoted by $x_1(t),x_2(t),\ldots, x_n(t)$, the position of agent $i$ at the next time step would be  
\begin{align}\nonumber 
x_i(t+1)&=\argmin_{x} \sum_{j=1}^{n}\min\{\|x-x_j(t)\|^2,\epsilon^2\}\cr 
&=\argmin_{x}\sum_{j\in \N_i(t)}\|x-x_j(t)\|^2=\frac{\sum_{j\in \N_i(t)}x_j(t)}{|\N_i(t)|}.
\end{align}
That establishes the equivalence between the best response dynamics and the updating process in the asynchronous Hegselmann-Krause model.   
\end{proof}

\smallskip
\begin{proposition}
An action profile $(x^*_1,x^*_2,\ldots,x^*_n)$ is a Nash equilibrium of the network formation game if and only if it is a steady state of the asynchronous Hegselmann-Krause dynamics.
\end{proposition}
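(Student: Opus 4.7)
The plan is to use Lemma~\ref{lemm:best-response} to reduce both sides of the equivalence to the same collection of fixed-point equations, one per agent. Specifically, I will show that both ``Nash equilibrium'' and ``steady state'' can be rewritten as the condition
\begin{equation*}
x^*_i = \frac{\sum_{j \in \N_i^*} x^*_j}{|\N_i^*|} \quad \text{for every } i \in [n],
\end{equation*}
where $\N_i^*$ denotes the $\epsilon$-neighborhood of agent $i$ evaluated at the profile $x^*$. Once this common characterization is in hand, the ``iff'' is immediate.

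For the forward direction, I would start from the assumption that $(x^*_1, \ldots, x^*_n)$ is a Nash equilibrium, meaning each $x^*_i$ maximizes $U_i(\,\cdot\,, x^*_{-i})$. By Lemma~\ref{lemm:best-response}, the maximizer computed from $U_i$ equals the centroid $\frac{\sum_{j \in \N_i^*} x^*_j}{|\N_i^*|}$, so $x^*_i$ must coincide with this centroid for every $i$. Hence, no matter which agent $i^*$ the update rule \eqref{eq:asynchronous-Hegselmann-Krause} selects, the resulting profile is again $x^*$, which is the definition of a steady state. For the reverse direction, I would assume $x^*$ is a steady state and simply read off the $i^*$-th row of \eqref{eq:asynchronous-Hegselmann-Krause} for every possible choice of $i^*$; this directly yields the fixed-point equation above for every $i$. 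Invoking Lemma~\ref{lemm:best-response} once more identifies each $x^*_i$ as a best response to $x^*_{-i}$, so no unilateral deviation can strictly improve $U_i$ and $x^*$ is a Nash equilibrium.

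The main conceptual point I expect to address is pinning down what ``steady state of the asynchronous dynamics'' means, since \eqref{eq:asynchronous-Hegselmann-Krause} is driven by the random choice of $i^*$. I plan to adopt the natural sample-path interpretation consistent with Section~\ref{sec:HKdyn}: $x^*$ is a steady state if and only if it is fixed by \eqref{eq:asynchronous-Hegselmann-Krause} for \emph{every} possible choice of the active agent, not merely almost surely under the uniform updating scheme. Under that reading the argument is essentially a tautology modulo Lemma~\ref{lemm:best-response}, and I do not expect a serious obstacle beyond being precise about that convention and correctly quoting the lemma.
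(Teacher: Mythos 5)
Your forward direction is correct but takes a genuinely different route from the paper. The paper argues by contradiction: it picks two distinct but mutually visible opinions, passes to their connected component, and observes that an extreme point of the convex hull of that component cannot be playing a best response, since relocating to the centroid of its neighborhood strictly increases its payoff. That argument needs only the easy half of Lemma~\ref{lemm:best-response} (that moving to the centroid of the \emph{current} neighborhood is an improving deviation), and as a byproduct it exhibits the structure of steady states as coincident clusters pairwise more than $\epsilon$ apart. Your version reads the Nash condition directly through the lemma, which is fine for this direction up to a minor caveat about uniqueness of the maximizer.

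The genuine gap is in your reverse direction. Lemma~\ref{lemm:best-response} asserts that the global minimizer of $\sum_j\min\{\|x-x_j\|^2,\epsilon^2\}$ is the centroid of the current neighborhood $\N_i$, but the second equality in its proof silently restricts the minimization to positions that preserve that neighborhood; the objective is not convex, and its global minimizer can lie in an entirely different cluster. Consequently, ``$x^*_i$ equals the centroid of $\N^*_i$ for every $i$'' does not imply that no unilateral deviation improves $U_i$. Concretely, a single agent sitting alone at distance greater than $\epsilon$ from $k\ge 1$ coincident agents is a fixed point of \eqref{eq:asynchronous-Hegselmann-Krause} for every choice of active agent, yet by \eqref{eq:payoff} the lone agent's payoff rises from $0$ to $k\epsilon^2$ if it relocates onto the cluster, so the profile is not a Nash equilibrium under unrestricted deviations. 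Hence the step ``invoking Lemma~\ref{lemm:best-response} once more identifies each $x^*_i$ as a best response, so no unilateral deviation can strictly improve $U_i$'' fails. In fairness, the paper itself dismisses this direction as ``quite straightforward'' without argument, so you have not omitted a proof the paper supplies; but your explicit justification does not go through, and a correct treatment would have to restrict to neighborhood-preserving (local) deviations or otherwise exclude configurations such as the one above.
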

\begin{proof}
Given an arbitrary Nash equilibrium $(x^*_1,x^*_2,\ldots,x^*_n)$, we show that it is a steady state of the asynchronous Hegselmann-Krause dynamics by showing that for all $i,j\in [n]$ we either have $x^*_i=x^*_j$, or $\|x^*_i-x^*_j\|>\epsilon$. To show this by contradiction, let us assume that there are two players at locations $x^*_p\neq x^*_q$ such that $\|x^*_p-x^*_q\|\leq\epsilon$. Let $L=\{x^*_p,x^*_q,x^*_{\ell_1},\ldots,x^*_{\ell_s}\}$ denote the set of all the players' actions at this equilibrium point which are in the same connected component as $x^*_p$ and $x^*_q$ in the communication graph. Denoting one of the extreme points of $conv(L)$ by $x^*_{\ell}$ and using Lemma \ref{lemm:best-response}, it is not hard to see that player $\ell$'s action is not his best response, i.e., $\frac{\sum_{j\in \N^*_{\ell}}x^*_j}{|\N^*_{\ell}|}\neq x^*_{\ell}$, where $\N^*_{\ell}=\{j:\|x^*_j-x^*_{\ell}\|\leq \epsilon\}$. This is in contrast with the assumption of $(x^*_1,x^*_2,\ldots,x^*_n)$ being a Nash equilibrium. To show that every steady state of the asynchronous Hegselmann-Krause dynamics is a Nash equilibrium of the network formation game is quite straight forward.
\end{proof}

\smallskip
Next we show that the above network formation game is, indeed, a potential game, with the sum of the utilities as a potential function. A further result (Corollary \ref{cor:team-problem}) shows directly that it is strategically equivalent to a team problem.   
\smallskip
\begin{theorem}\label{thm:potential-game}
The network formation game is a potential game with a potential function of $U(x_1,x_2,\ldots,x_n)=\sum_{i=1}^{n}U_i(x_i,x_{-i})$. In particular, we have
\begin{align}\nonumber
U(x_i,x_{-i})-U(x'_i,x_{-i})\leq -2|\N_i|\|x_i-x'_i\|^2,
\end{align}
where $x'_i$ denotes the deviation of the $i$th player from action $x_i$ to his best response $x'_i=\frac{1}{|\N_i|}\sum_{j\in \N_i}x_j$, and $x_{-i}$ denotes the actions of all players except the $i$th one.  
\end{theorem}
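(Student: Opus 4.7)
The plan is to exploit the pairwise symmetric structure of the payoff functions to establish the potential game property almost for free, and then to use a standard parallel axis (variance decomposition) identity to quantify the decrease in the potential caused by a best-response move.

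First, I would rewrite the proposed potential as
\begin{align}\nonumber
U(x_1,\ldots,x_n)=n(n-1)\epsilon^2-\sum_{i=1}^{n}\sum_{j=1}^{n}\min\{\|x_i-x_j\|^2,\epsilon^2\}.
\end{align}
The key observation is that when only player $i$ deviates from $x_i$ to $x'_i$, every term $\min\{\|x_k-x_\ell\|^2,\epsilon^2\}$ with $k,\ell\neq i$ is unaffected, and the remaining affected terms are exactly the ones that appear in $U_i$. Because $\min\{\|x_i-x_j\|^2,\epsilon^2\}=\min\{\|x_j-x_i\|^2,\epsilon^2\}$ and the $j=i$ term vanishes, each such affected pair appears twice in $U$ (once in $U_i$ and once in $U_j$), which immediately gives the clean identity
\begin{align}\nonumber
U(x_i,x_{-i})-U(x'_i,x_{-i})=2\bigl(U_i(x_i,x_{-i})-U_i(x'_i,x_{-i})\bigr).
\end{align}
This already certifies that $U$ is a potential function for the game.

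Next, I would quantify the right-hand side when $x'_i=\frac{1}{|\N_i|}\sum_{j\in \N_i}x_j$ is player $i$'s best response, writing $N=\N_i$ for the current neighborhood. The truncation $\min\{\cdot,\epsilon^2\}$ makes the neighborhood set $N'$ associated to $x'_i$ possibly different from $N$, but this can be finessed by the pointwise bound $\min\{\|x'_i-x_j\|^2,\epsilon^2\}\le \epsilon^2$ for $j\notin N$ and $\min\{\|x'_i-x_j\|^2,\epsilon^2\}\le \|x'_i-x_j\|^2$ for $j\in N$. Combining this with the exact expression $\sum_{j}\min\{\|x_i-x_j\|^2,\epsilon^2\}=\sum_{j\in N}\|x_i-x_j\|^2+(n-|N|)\epsilon^2$ yields
\begin{align}\nonumber
U_i(x'_i,x_{-i})-U_i(x_i,x_{-i})\ge \sum_{j\in N}\|x_i-x_j\|^2-\sum_{j\in N}\|x'_i-x_j\|^2.
\end{align}
I would then apply the parallel-axis identity $\sum_{j\in N}\|y-x_j\|^2=|N|\,\|y-x'_i\|^2+\sum_{j\in N}\|x'_i-x_j\|^2$ (valid precisely because $x'_i$ is the centroid of $\{x_j\}_{j\in N}$) once with $y=x_i$ to collapse the difference into $|N|\,\|x_i-x'_i\|^2=|\N_i|\,\|x_i-x'_i\|^2$.

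Finally I would combine the two displayed inequalities to obtain $U(x_i,x_{-i})-U(x'_i,x_{-i})\le -2|\N_i|\,\|x_i-x'_i\|^2$. The main obstacle I anticipate is the careful handling of the possibly changing neighborhood $N'\neq N$ when $x_i$ moves to $x'_i$; I expect the cleanest fix is exactly the one-sided $\min$-bound above, which turns the otherwise fiddly case analysis (on $N\cap N'$, $N\setminus N'$, $N'\setminus N$) into a single pointwise estimate and leaves the rest as algebra.
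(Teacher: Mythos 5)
Your proof is correct and follows the same underlying strategy as the paper's: use the pairwise symmetry of the payoffs to show that a unilateral deviation changes $U$ by exactly twice the change in $U_i$, bound the change in $U_i$ by the difference $\sum_{j\in \N_i}\|x_i-x_j\|^2-\sum_{j\in \N_i}\|x'_i-x_j\|^2$, and collapse that difference to $|\N_i|\|x_i-x'_i\|^2$ via the centroid (parallel-axis) identity. The only real difference is in execution: the paper handles the changing neighborhood by explicitly enumerating the utility changes over $\N_i\cap\N'_i$, $\N_i\setminus\N'_i$, and $\N'_i\setminus\N_i$ and then invoking $\|x_j-x'_i\|^2\ge\epsilon^2$ or $\le\epsilon^2$ on the two asymmetric pieces, whereas your one-sided pointwise bound on the $\min$ replaces that three-way case analysis with a single estimate, and your double-sum formulation yields the factor-of-two relation as an exact identity rather than extracting it from the term-by-term bookkeeping. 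This is a tidier write-up of the same argument and, as a small bonus, the exact identity $U(x_i,x_{-i})-U(x'_i,x_{-i})=2\bigl(U_i(x_i,x_{-i})-U_i(x'_i,x_{-i})\bigr)$ certifies the potential property for arbitrary unilateral deviations, not only best responses.
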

\smallskip
\begin{proof}
Let $\N_i$ and $\N'_i$ denote the set of neighbors of player $i$ before and after deviating, respectively. By definition of the payoff function of players \eqref{eq:payoff}, we can write, 
\begin{align}\label{eq:potential-break-sums}
U(x_i,x_{-i})-U(x'_i,&x_{-i})=\!\!\!\!\!\!\sum_{j\in \N_i\cup \N'_i}\!\!\!\!\big(U_j(x_i,x_{-i})-U_j(x'_i,x_{-i})\big)\cr 
&=U_i(x_i,x_{-i})-U_i(x'_i,x_{-i})\cr
&+\!\!\!\!\sum_{j\in \N_i\cap\N'_i}\!\!\!\big(U_j(x_i,x_{-i})-U_j(x'_i,x_{-i})\big)\cr 
&+\!\!\!\!\!\!\sum_{j\in \N_i\setminus\N_i\cap\N'_i}\!\!\!\big(U_j(x_i,x_{-i})-U_j(x'_i,x_{-i})\big)\cr
&+\!\!\!\!\!\!\sum_{j\in \N'_i\setminus\N_i\cap\N'_i}\!\!\!\big(U_j(x_i,x_{-i})-U_j(x'_i,x_{-i})\big),
\end{align}
where the first equality is due to the fact that the utility of the players who do not observe $x_i$ or $x'_i$ does not change. 

Next, we compute each of the summands in the above expression. Note that only the action of player $i$ changes from $x_i$ to $x'_i$, while all others' actions remain unchanged (Figure \ref{fig:FigurePic1}). We can write,
\begin{align}\label{eq:summand-one-potential} 
&U_j(x_i,x_{-i})\!-\!U_j(x'_i,x_{-i})\!=\!\|x_j\!-\!x'_i\|^2\!-\!\|x_j\!-\!x_i\|^2\!\!, j\!\in\! \N_i\!\cap\!\N'_i \cr 
&U_j(x_i,x_{-i})\!-\!U_j(x'_i,x_{-i})\!=\epsilon^2\!-\!\|x_j\!-\!x_i\|^2, j\in\N_i\setminus\N_i\!\cap\!\N'_i \cr 
&U_j(x_i,x_{-i})\!-\!U_j(x'_i,x_{-i})\!=\!\|x_j\!-\!x'_i\|^2\!-\epsilon^2, j\in\N'_i\setminus\N_i\!\cap\!\N'_i. 
\end{align}
\begin{figure}[htb]
\begin{center}
\includegraphics[totalheight=.2\textheight,
width=.3\textwidth,viewport=0 0 450 450]{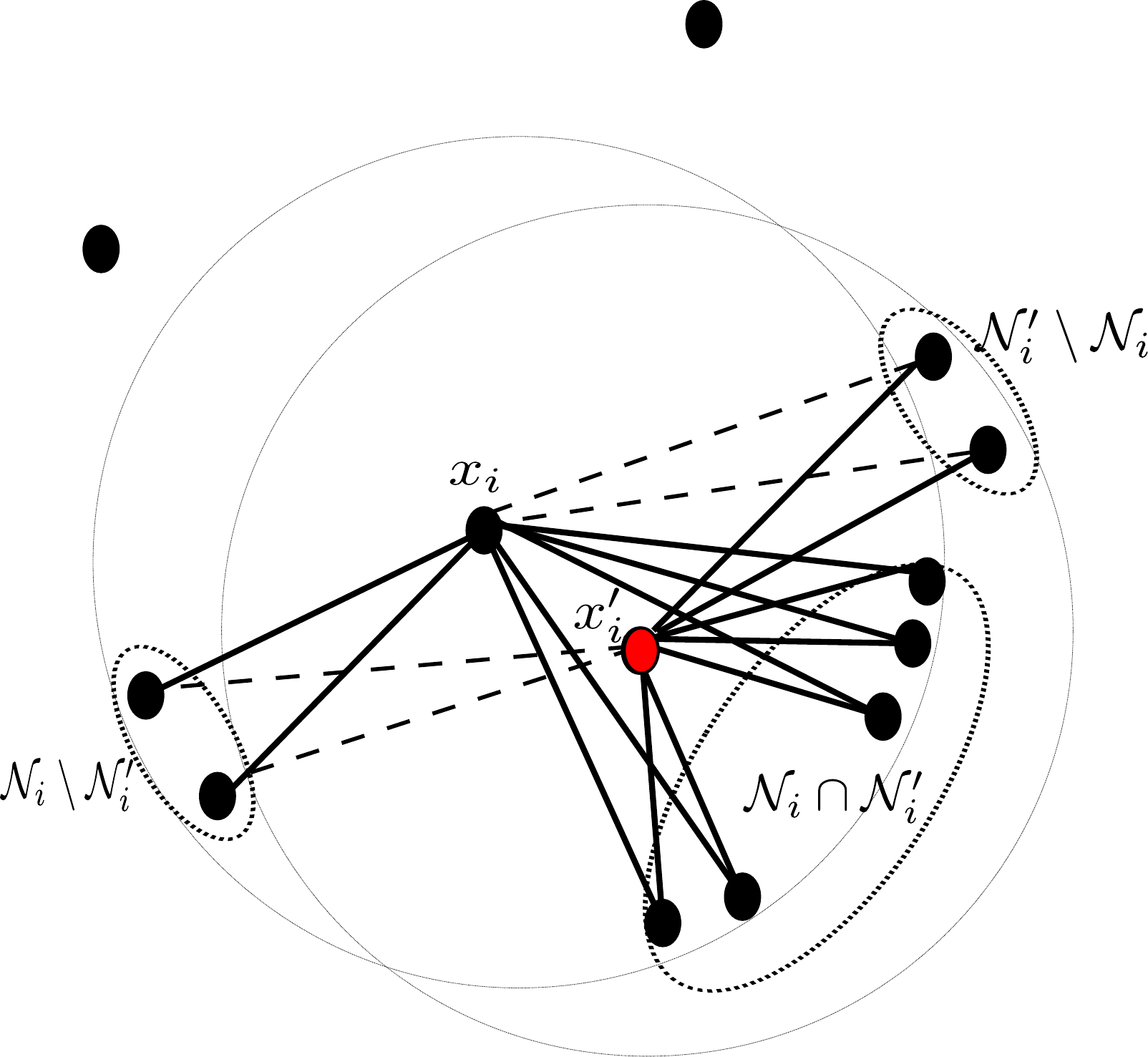} \hspace{0.4in}
\end{center}
\vspace{-0.4cm}
\caption{Deviation of the $i$th player by updating to his best response $x'_i$.}
\label{fig:FigurePic1}
\end{figure}

The reason for the first equality in \eqref{eq:summand-one-potential} is that after the $i$th player deviates, every agent in $j\in\N_i\!\cap\!\N'_i$ still holds his connection with $i$, and hence, by the definition of the payoff function \eqref{eq:payoff}, his payoff is subjected to a change of $\!\|x_j\!-\!x'_i\|^2\!-\!\|x_j\!-\!x_i\|^2$. (Note that all players except the $i$th one are kept fixed.) Similarly, every player $j\in\N_i\setminus\N_i\!\cap\!\N'_i$ stays connected to $x_i$ while disconnecting his link with the $i$th player after $i$'s deviation (since agent $i$ gets far from him by moving from $x_i$ to $x'_i$, and hence they both prefer to stop building the road and each pay $\$\epsilon^2$ to the government). Therefore, the amount of change in the $j$th player's payoff is equal to $\epsilon^2\!-\!\|x_j\!-\!x_i\|^2$. In a similar way, one can observe that the third equality in \eqref{eq:summand-one-potential} holds. By the same line of argument and because of symmetry, one can easily show that the amount of change in the $i$th player's payoff is equal to the sum of all the terms in \eqref{eq:summand-one-potential} over $j\in \N_i\cup \N'_i$. In fact, we can write,
\begin{align}\label{eq:i-utility-change}
&U_i(x_i,x_{-i})\!-\!U_i(x'_i,x_{-i})\!=\!\!\!\!\sum_{j\in \N_i\cap\N'_i}\!\!\!\big(\|x_j\!-\!x'_i\|^2-\|x_j\!-\!x_i\|^2\big)\cr 
&+\!\!\!\!\!\!\!\!\!\sum_{j\in \N_i\setminus\N_i\cap\N'_i}\!\!\!\!\!\!\!\!\!\big(\epsilon^2\!-\!\|x_j-x_i\|^2\big)+\!\!\!\!\!\!\!\!\!\sum_{j\in \N'_i\setminus\N_i\cap\N'_i}\!\!\!\!\!\!\!\!\!\big(\|x_j-x'_i\|^2\!-\!\epsilon^2\big)\cr 
&=(|\N_i|-(|\N_i\cap \N'_i|))\epsilon^2-(|\N'_i|-(|\N_i\cap \N'_i|))\epsilon^2\cr 
&+\sum_{j\in \N'_i}\!\!\!\|x_j\!-\!x'_i\|^2-\sum_{j\in \N_i}\!\!\!\|x_j\!-\!x_i\|^2\cr 
&\leq  \!\!\!\!\!\!\!\!\!\sum_{j\in \N_i\setminus \N_i\cap\N'_i}\!\!\!\!\!\!\!\!\!\!\!\|x_j\!-\!x'_i\|^2\!\!-\!\!\!\!\!\!\!\!\!\!\!\!\!\sum_{j\in \N'_i\setminus \N_i\cap\N'_i}\!\!\!\!\!\!\!\!\!\!\!\|x_j\!-\!x'_i\|^2\!\!+\!\!\!\sum_{j\in \N'_i}\!\!\!\|x_j\!-\!x'_i\|^2\!-\!\!\!\!\sum_{j\in \N_i}\!\!\!\|x_j\!-\!x_i\|^2\cr 
&=\sum_{j\in \N_i}\!\!\|x_j\!-\!x'_i\|^2-\sum_{j\in \N_i}\!\!\|x_j\!-\!x_i\|^2,
\end{align} 
where in the last inequality we have used the facts that 
\begin{align}\nonumber
&(|\N_i|-(|\N_i\cap \N'_i|))\epsilon^2\leq \sum_{j\in \N_i\setminus \N_i\cap \N'_i}\|x_j-x'_i\|^2 \cr  
&(|\N'_i|-(|\N_i\cap \N'_i|))\epsilon^2\ge \sum_{j\in \N'_i\setminus \N_i\cap \N'_i}\|x_j-x'_i\|^2.
\end{align}
(Note that $\|x_j-x'_i\|^2\geq \epsilon^2$, if $j\in \N_i\setminus \N_i\cap \N'_i$, and $\|x_j-x'_i\|^2\leq \epsilon^2$, if $j\in \N'_i\setminus \N_i\cap \N'_i$.)
Substituting \eqref{eq:summand-one-potential} and \eqref{eq:i-utility-change} in \eqref{eq:potential-break-sums} and using \eqref{eq:i-utility-change}, we get 
\begin{align}\nonumber
U(x_i,x_{-i})-U(x'_i,x_{-i})&\leq 2[\sum_{j\in \N_i}\!\!\!\|x_j\!-\!x'_i\|^2-\!\!\sum_{j\in \N_i}\!\!\!\|x_j\!-\!x_i\|^2]\cr 
&=-2|\N_i|\|x_i-x'_i\|^2,
\end{align} 
where the last equality comes from substituting $x'_i=\frac{1}{\N_i}\sum_{j\in \N_i}x_j$ because player $i$ deviates to his best place (Lemma \ref{lemm:best-response}).      
\end{proof}

\smallskip
\begin{corollary}\label{cor:team-problem}
The network formation game is strategically equivalent to a team problem.   
\end{corollary}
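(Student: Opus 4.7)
The plan is to exhibit the single function $U(x) = \sum_{k=1}^n U_k(x_k, x_{-k})$ of Theorem \ref{thm:potential-game} as a common payoff that, when assigned to every player, leaves the best-response correspondence of the network formation game unchanged. Since the modified game has a single shared payoff, it is by definition a team problem, and strategic equivalence will follow.

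The first step is to rewrite the potential, using symmetry of the pairwise term $\min\{\|x_p-x_q\|^2,\epsilon^2\}$, as
\begin{align}\nonumber
U(x) = n(n-1)\epsilon^2 - 2\sum_{p<q}\min\{\|x_p-x_q\|^2,\epsilon^2\},
\end{align}
which makes it manifest that each unordered pair $\{i,j\}$ contributes to $U$ exactly twice its contribution to $U_i$. Next, fixing $i$ and the profile $x_{-i}$, I would compare the two maps $x_i \mapsto U(x_i, x_{-i})$ and $x_i \mapsto U_i(x_i, x_{-i})$: pairs not involving $i$ contribute only a constant (in $x_i$) to $U$ and nothing to $U_i$, while pairs involving $i$ enter $U$ with coefficient $2$ and $U_i$ with coefficient $1$. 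The combinatorial accounting already performed in the proof of Theorem \ref{thm:potential-game} (in particular the expansion \eqref{eq:potential-break-sums} together with \eqref{eq:summand-one-potential} and \eqref{eq:i-utility-change}) therefore yields the \emph{exact} identity
\begin{align}\nonumber
U(x'_i,x_{-i}) - U(x_i,x_{-i}) = 2\bigl[U_i(x'_i,x_{-i}) - U_i(x_i,x_{-i})\bigr]
\end{align}
for every $x_i$ and every candidate $x'_i$, not only best responses. Equivalently, $U(\cdot,x_{-i}) - 2 U_i(\cdot,x_{-i})$ is a constant in $x_i$.

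Since $U(\cdot,x_{-i})$ and $U_i(\cdot,x_{-i})$ thus differ by a positive affine transformation in the variable being optimized, their argmax sets over $x_i$ coincide for every player $i$ and every $x_{-i}$. Replacing each $U_i$ by the common function $U$ therefore preserves every player's best-response correspondence and, in particular, the set of Nash equilibria, while turning the game into one with a single shared objective, which is precisely a team problem in which all players jointly maximize $U$. The only care required is tracking the factor of $2$ that arises because each pairwise cost is counted once in $U_i$ but twice in $\sum_k U_k$; beyond that bookkeeping there is no substantive obstacle.
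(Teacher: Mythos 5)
Your proposal is correct and follows essentially the same route as the paper: both establish that $U(x)-2U_i(x_i,x_{-i})$ is a function of $x_{-i}$ alone (the paper writes this residual explicitly as $\beta(x_{-i})=(n-1)(n-2)\epsilon^2-\sum_{r,s\in[n]\setminus\{i\}}\min\{\|x_r-x_s\|^2,\epsilon^2\}$, which is exactly what your pairwise double-counting produces), so that each player's objective is a positive affine transformation of the common payoff $U$. One small remark: your appeal to \eqref{eq:i-utility-change} is unnecessary and slightly misleading since that display is an inequality, but your direct decomposition $U=n(n-1)\epsilon^2-2\sum_{p<q}\min\{\|x_p-x_q\|^2,\epsilon^2\}$ already gives the exact identity on its own.
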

\begin{proof}
For any arbitrary player $i\in [n]$, let us define $\beta(x_{-i})\!=\!(n-1)(n-2)\epsilon^2\!-\!\sum_{r,s\in [n]\setminus\{i\}}\min\{\|x_r-x_s\|^2, \epsilon^2\}$. Note that $\beta(x_{-i})$ depends on the actions of all the players except the $i$th player. By definition of $U(x_1,\ldots,x_n)=\sum_{k=1}^{n}U_k(x_k,x_{-k})$, we can write
\begin{align}\nonumber
2U_i(x_i,x_{-i})+\beta(x_{-i})=U(x_1,x_2,\ldots,x_n).
\end{align}
This shows that the network formation game is essentially a team problem, in the sense that every Nash equilibrium of the game is a person-by-person optimal solution for the team, and vice versa. More details on such strategic equivalence can be found in \cite{basar1999dynamic}.         
\end{proof}

Now we are ready to provide an upper bound on the expected number of steps until the asynchronous Hegselmann-Krause dynamics with a uniform updating scheme reaches its $\delta$-equilibrium. 

\smallskip
\begin{theorem}
The expected number of steps until the agents in the asynchronous Hegselmann-Krause dynamics with a uniform updating schedule reach a $\delta$-equilibrium is bounded from above by $2n^9(\frac{\epsilon}{\delta})^2$. 
\end{theorem}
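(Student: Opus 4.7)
The plan is to combine the potential-game structure of Theorem~\ref{thm:potential-game} with the Laplacian spectral-gap estimate developed in the synchronous proof, and to close by a standard drift argument for the hitting time $T$ of a $\delta$-equilibrium. First, I would observe that if $x(t)$ is not a $\delta$-equilibrium, then the partition of $[n]$ into connected components of the communication graph $\mathcal{G}(t)$ automatically meets the separation clause in Definition~\ref{def:delta-equilibrium} (distinct components are at pairwise distance greater than $\epsilon$, by construction), so violation must come from some component $C$ with $diam(conv(C))\geq \delta$.

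Next, I would replay the Laplacian--Cheeger chain used in the synchronous termination-time proof, restricted to this component $C$ and with the threshold $\epsilon$ replaced by $\delta$. The decomposition \eqref{eq:consensus-projection}, the chain \eqref{eq:c-bar-epsilon-trivial}, and the spectral bound \eqref{eq:estimate-L2-I} use $\epsilon$ \emph{only} to lower bound $\sum_k \bar{c}_k^2$ via the non-$\epsilon$-trivial hypothesis; substituting the weaker hypothesis $diam(conv(C))\geq \delta$ yields $\sum_k \bar{c}_k^2\geq \delta^2/4$, while the estimate $\lambda_2(Q(t))\geq 4/|C|^6\geq 4/n^6$ depends only on the combinatorics of a connected subgraph on at most $n$ vertices. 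Plugging into \eqref{eq:decrease-lambda_2(Q)} gives
\begin{align*}
\sum_{i\in C}\|x'_i-x_i\|^2 \;\geq\; \lambda_2(Q(t))\,\frac{\delta^2}{4}\;\geq\;\frac{\delta^2}{n^6},
\end{align*}
where $x'_i=\frac{1}{|\mathcal{N}_i(t)|}\sum_{j\in \mathcal{N}_i(t)} x_j(t)$ denotes the best response of agent $i$.

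Third, I would invoke Theorem~\ref{thm:potential-game}: when agent $i$ is selected and updates, the potential $U$ grows by at least $2|\mathcal{N}_i(t)|\|x_i-x'_i\|^2\geq 2\|x_i-x'_i\|^2$. Averaging over the uniform update law, conditional on $x(t)$ being non-$\delta$-equilibrium one obtains
\begin{align*}
\mathbb{E}\bigl[U(t+1)-U(t)\mid x(t)\bigr] \;\geq\; \frac{2}{n}\sum_{i=1}^n\|x'_i-x_i\|^2 \;\geq\; \frac{2\delta^2}{n^7}.
\end{align*}
Since $U$ is monotonically nondecreasing along any sample path (by Theorem~\ref{thm:potential-game}) and bounded in $[0,n(n-1)\epsilon^2]$, the optional stopping theorem applied to the submartingale $U(t)-\frac{2\delta^2}{n^7}(t\wedge T)$ (after the standard truncation at $T\wedge N$ and letting $N\to\infty$) produces $\mathbb{E}[T]\leq \frac{n(n-1)\epsilon^2\,n^7}{2\delta^2}\leq 2n^9(\epsilon/\delta)^2$.

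The main obstacle is the second step: I must verify that the Laplacian/Cheeger chain in the synchronous proof is genuinely insensitive to the threshold used to certify non-triviality of the component, so that the substitution $\epsilon\rightsquigarrow \delta$ preserves every constant along the way, and that restricting the analysis to a single connected component (rather than the whole graph) does not compromise the spectral estimates. A minor technicality is formalizing the drift argument for the random time $T$, but this is a standard optional-stopping-with-truncation argument.
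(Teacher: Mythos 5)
Your proposal is correct and follows essentially the same route as the paper: lower-bound the expected one-step increase of the potential $U$ via Theorem~\ref{thm:potential-game} and the uniform update law, then rerun the consensus-projection/Cheeger chain from the synchronous proof with $\epsilon$ replaced by $\delta$ to get $\sum_i\|x'_i-x_i\|^2\ge \delta^2/n^6$ whenever a non-$\delta$-trivial component exists, and finish by comparing against the bound $U\le n^2\epsilon^2$. You are merely more explicit than the paper about the final drift/optional-stopping step, which the paper compresses into a one-line counting argument.
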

\smallskip
\begin{proof}
We evaluate the expected increase of the potential function given in Theorem \ref{thm:potential-game}. Since each player is chosen independently and with probability $\frac{1}{n}$, we have
\begin{align}\label{eq:expected-decrease-potential}
\mathbb{E}[U(t+1)-U(t)]&=\sum_{i=1}^{n}\frac{1}{n}\mathbb{E}[U(t+1)-U(t)|i\  \mbox{updates}]\cr 
&\ge 2\sum_{i=1}^{n} \frac{|\N_i(t)|}{n}\|x_i(t)-x_i(t+1)\|^2\cr 
&\ge \frac{2}{n}\sum_{i=1}^{n}\|x_i(t)-x_i(t+1)\|^2,
\end{align} 
where in the first inequality we have used the result of Theorem \ref{thm:potential-game}. 

Now using the result of Theorem \ref{thm:termination-time-singletons} and by the same argument as in derivation of \eqref{eq:c-bar-epsilon-trivial}, we know that as long as there is a non-$\delta$-trivial component, we must have $\sum_{k=1}^{d}\bar{c}_k\ge \frac{\delta^2}{4}$, and therefore, $\sum_{i=1}^{n}\|x_i(t)-x_i(t+1)\|^2\ge \frac{\delta^2}{n^6}$. Moreover, since $U(\tau)< n^2\epsilon^2$, we conclude that the expected number of times that nontrivial components of a diameter larger than $\delta>0$ will emerge is bounded from above by $2n^9(\frac{\epsilon}{\delta})^2$.      
\end{proof}

\smallskip
In fact, in the case of scalar asynchronous Hegselmann-Krause dynamics, one could come up with a sharper bound which we state in the following lemma. 
\smallskip 
\begin{lemma}\label{lemm:scalar-asynchronous}
The expected number of steps until the scalar asynchronous Hegselmann-Krause dynamics reach an $\frac{\epsilon}{n}$-equilibrium is bounded from above by $n^{5+2\log_{n}(n+1)}+n$.
\end{lemma}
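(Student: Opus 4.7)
The plan is to exploit two features of the one-dimensional asynchronous dynamics that are unavailable in the multidimensional case: (i) the sorted order of agents' opinions is preserved across asynchronous updates, and (ii) the communication graph restricted to any non-trivial connected component is an interval graph, admitting sharper spectral estimates than the generic Cheeger bound used in the preceding theorem.

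To establish (i), I would verify that if $x_1(t)\le\cdots\le x_n(t)$ and agent $i^*$ updates to $x_{i^*}(t+1) = \frac{1}{|\N_{i^*}(t)|}\sum_{j\in\N_{i^*}(t)} x_j(t)$, then $x_{i^*}(t+1)$ is a convex combination of opinions within $\epsilon$ of $x_{i^*}(t)$: its lower-sorted neighbor $x_{i^*-1}(t)$ is either inside the $\epsilon$-ball (and hence averaged in) or lies strictly more than $\epsilon$ below, and either way the new value remains at least $x_{i^*-1}(t)$. The upper bound is symmetric, so the sorted order persists and the communication graph at each step is an interval graph with contiguous neighborhoods.

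Using (i), the argument of the preceding theorem can be revisited with sharper constants. The earlier proof applied Cheeger's inequality to a worst-case connected graph on $n$ vertices, obtaining $\lambda_2(\mathcal L(t))\ge 2/n^2$, and combined with the factor $D(t)^{-2}\succeq (1/n^2)I$ it gave $\lambda_2(Q(t))\ge 4/n^6$. For an interval graph a tight isoperimetric calculation (any bisecting cut crosses at least one edge) yields an isoperimetric number of order $1/(n+1)$, refining $\lambda_2(\mathcal L(t))$ to order $1/(n+1)^2$. Combined with the fact that in dimension $d=1$ the decomposition \eqref{eq:consensus-projection} collapses to a single scalar coordinate with $\bar c_1^2\ge\delta^2/4$, the per-step expected decrease of the Lyapunov function $V$ from Lemma~\ref{lemm:Lyapunov-mardavij} sharpens to order $\delta^2/(n^5(n+1)^2)$.

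Integrating this decrease against $V(0)\le n^2\epsilon^2$ with target $\delta=\epsilon/n$, and including the at-most-$n$ merging events as in the preceding theorem, yields the claimed upper bound $n^{5+2\log_n(n+1)}+n$. The main obstacle will be establishing the tight interval-graph isoperimetric estimate that produces the exact $(n+1)^2$ factor; as a backup I would consider a direct product-matrix argument applying Lemma~\ref{lemm:convex-hull} to a block of $\Theta(n)$ consecutive asynchronous update matrices, exploiting that under uniform random selection each agent is chosen at least once within such a window with constant probability, so that the contiguous-neighborhood structure in one dimension yields the required scrambling constant $\mu\ge c/(n+1)^2$.
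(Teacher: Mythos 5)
Your proposal diverges fundamentally from the paper's argument, and its two load-bearing claims do not hold. First, claim (i) --- that the sorted order of opinions is preserved under asynchronous updates --- is false, and the paper explicitly warns of this (``in the asynchronous case, the order of the agents' opinions may or may not change''). Concretely, take $x_1=0$, $x_2=0.1$, $x_3=1$ with $\epsilon=1$: agent $1$ sees all three agents and updates to $11/30>x_2$, overtaking its upper-sorted neighbor. Your symmetry argument only controls the updating agent against the extremes of its \emph{own} neighborhood, not against agents whose neighborhoods differ from its own. Second, the claimed interval-graph improvement of the isoperimetric number to order $1/(n+1)$ is not available: a path graph is an interval graph, and under the paper's normalization $\Phi(S)=e(S,S^c)/(|S||S^c|)$ the balanced cut of a path gives $\Phi=\Theta(1/n^2)$, the same order as the generic connected-graph bound already used in the synchronous theorem, so no sharpening of $\lambda_2(\mathcal{L}(t))$ results. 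Finally, even granting your per-step expected gain of order $\delta^2/(n^5(n+1)^2)$, dividing $V(0)\leq n^2\epsilon^2$ by it with $\delta=\epsilon/n$ yields roughly $n^9(n+1)^2$ steps, not $n^5(n+1)^2=n^{5+2\log_n(n+1)}$; the arithmetic does not close.

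The exponent $2\log_n(n+1)$ is not a spectral artifact but the signature of the paper's actual mechanism, which is elementary and non-spectral. The paper fixes a component, looks at its leftmost agent $x_1(t)$ (normalized to $0$) and the farthest agent $x_m(t)$ visible to it, and splits into cases: if $x_m(t)>\epsilon/n^{\alpha}$, then agent $1$'s update moves it by at least $\epsilon/n^{1+\alpha}$, contributing $2\epsilon^2/n^{3+2\alpha}$ to the expected potential via \eqref{eq:expected-decrease-potential}; otherwise, either $x_m$ has a neighbor beyond $\epsilon$ and its own update moves it by at least $\epsilon/n-\epsilon/n^{\alpha}$, or the agents in $[0,x_m(t)]$ split off as a cluster of diameter at most $\epsilon/n^{\alpha}$ (at most $n$ such decompositions can occur). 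Choosing $\alpha$ to equalize the two gain rates forces $n^{\alpha}=n+1$, i.e., $\alpha=\log_n(n+1)$, and the bound $n^{5+2\log_n(n+1)}+n$ follows from $U(\cdot)\leq n^2\epsilon^2$. If you want to salvage a spectral route, you would need both a correct per-step drift bound that does not rely on order preservation and an accounting that reproduces this case-balancing; as written, the proposal has a genuine gap.
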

\smallskip
\begin{proof}
Consider a particular time instant $t$, and let $x_1(t)=\min_{k\in [n]}{x_k(t)}$ and $x_m(t)=\max_{k\in \N_1(t)}x_k(t)$. Also, without loss of generality, let us assume that $x_1(t)=0$. It is clear that if $x_m(t)>\frac{\epsilon}{n^{\alpha}}$ and agent 1 updates, then we will have $x_1(t+1)>\frac{\epsilon}{n^{1+\alpha}}$, where $\alpha$ is a number to be determined later. In this case, the expected potential function will increase by at least $\frac{2}{n}\|x_1(t)-x_1(t+1)\|^2\ge \frac{2\epsilon^2}{n^{3+2\alpha}}$. Otherwise, there is no other agent in the interval $[\frac{\epsilon}{n^{\alpha}}, \epsilon]$. Now we consider two cases (Figure \ref{fig:scalar-asynchronous}):
\begin{itemize}
\item Agent $x_m(t)$ has a neighbor in the interval $(\epsilon, x_m(t)+\epsilon]$. Assuming that agent $m$ updates, we will have $x_m(t+1)\ge \frac{x_m(t)+\epsilon}{n}$, and hence, 
\begin{align}\nonumber
\|x_m(t+1)-x_m(t)\|^2\ge \|\frac{\epsilon}{n}-x_m(t)\|^2\ge(\frac{\epsilon}{n}-\frac{\epsilon}{n^{\alpha}})^2.  
\end{align} 
Therefore, in this case and using \eqref{eq:expected-decrease-potential}, the amount of increase in the expected potential function is at least $\frac{2\epsilon^2}{n^3}(1-\frac{1}{n^{\alpha-1}})^2$. 
\item Agent $x_m(t)$ does not have any neighbor in the interval $(\epsilon, x_m(t)+\epsilon]$. We note that all the agents in the interval $[0, x_m(t)]$ form a cluster that is separated from other agents by a distance of at least $\epsilon$. Noting that two separate clusters of nodes on a real line will stay apart from each other in the rest of the dynamics, we can decompose the original dynamics into two groups and analyze each of them separately. 
\end{itemize}

\begin{figure}[htb]
\vspace{-2cm}
\begin{center}
\hspace{-0.5cm}\includegraphics[totalheight=.25\textheight,
width=.35\textwidth,viewport=0 0 450 450]{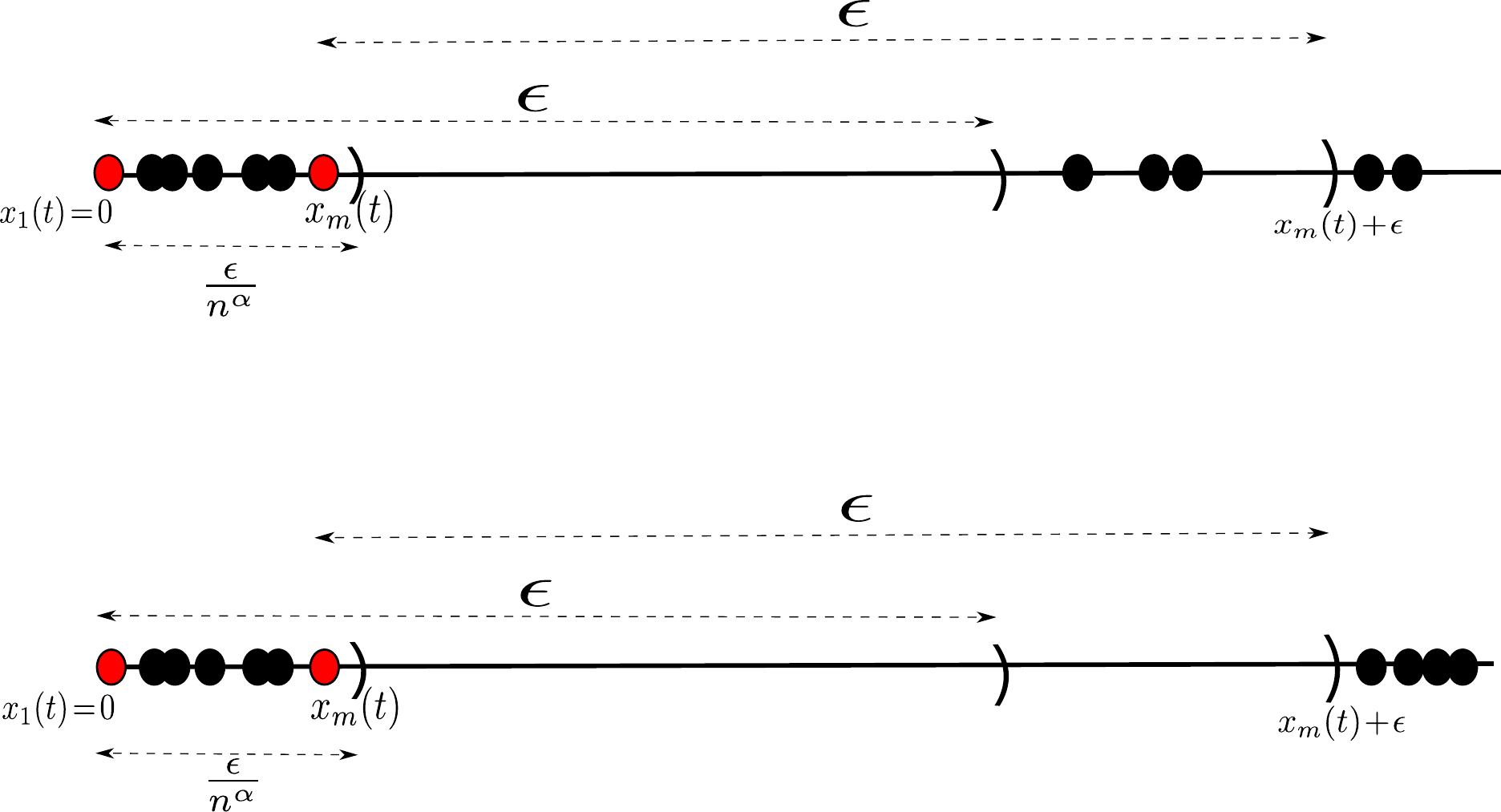} \hspace{0.4in}
\end{center}
\vspace{-0.4cm}
\caption{Illustration of two different cases in the proof of Lemma \ref{lemm:scalar-asynchronous}.}
\label{fig:scalar-asynchronous}
\end{figure}

By choosing $\alpha=\log_{n}(n+1)$, we get $\frac{2\epsilon^2}{n^{3+2\alpha}}=\frac{2\epsilon^2}{n^3}(1-\frac{1}{n^{\alpha-1}})^2$, and we can see that either we have an increase of size $\frac{2\epsilon^2}{n^{3+2\log_{n}(n+1)}}$ in the expected potential function, or the dynamics decompose into a cluster of size at most $\frac{\epsilon}{n^\alpha}<\frac{\epsilon}{n}$ and another part. Since the expected potential function cannot increase more than $n^{5+2\log_{n}(n+1)}$ number of steps ($U(\cdot)\leq n^2\epsilon^2$) and we cannot have more than $n$ clustering decompositions, the expected number of steps until the dynamics decompose to clusters whose size is at most $\frac{\epsilon}{n}$ is bounded from above by $n^{5+2\log_{n}(n+1)}+n$.      
\end{proof}

\smallskip
\begin{remark}
From the above lemma, after the expected number of $n^{5+2\log_{n}(n+1)}+n\approx n^7$, every agent lies within a cluster of diameter at most $\frac{\epsilon}{n}$, and those always are separated from each other by a distance of at least $\epsilon$. Therefore, each agent in a cluster can observe the others, and henceforth, the diameter of the convex hull of each of the clusters shrinks very fast. 
\end{remark}

In the following, we provide a bound on the expected number of switching topologies during the evolution of the asynchronous Hegselmann-Krause process. 

\bigskip
\begin{theorem}\label{thm:expected-switching}
The expected number of switching topologies of the asynchronous Hegselmann-Krause dynamics with a uniform updating scheme is bounded from above by $16n^9$.
\end{theorem}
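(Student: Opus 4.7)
My plan is to combine the previous $\delta$-equilibrium hitting-time bound with a topology-freezing observation: once the dynamics enter a $\delta$-equilibrium for any $\delta<\epsilon$, no further change in the communication graph can occur, so the number of topology switches is dominated by the hitting time of $\delta$-equilibrium.

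First, I would establish the following structural claim. Suppose $x(t)$ is a $\delta$-equilibrium with $\delta<\epsilon$ and cluster partition $\{C_1,\ldots,C_m\}$. Then regardless of which agent is selected at time $t$, $\mathcal{G}(t+1)=\mathcal{G}(t)$, and $x(t+1)$ is again a $\delta$-equilibrium with the same partition. The reasoning is that because $diam(conv(C_k))<\delta\le\epsilon$, every pair inside a cluster is at distance strictly less than $\epsilon$, so each $C_k$ is a clique of $\mathcal{G}(t)$; and because $dist(conv(C_i),conv(C_j))>\epsilon$, there are no inter-cluster edges. If $i^*\in C_k$ updates, then $\N_{i^*}(t)=C_k$ and $x_{i^*}(t+1)$ is the centroid of $C_k$, which lies in $conv(C_k)$. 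Therefore the new cluster's convex hull is a subset of the old $conv(C_k)$, its diameter cannot grow, and by the monotonicity $dist(A',B)\ge dist(A,B)$ whenever $A'\subseteq A$, the inter-cluster distances remain strictly above $\epsilon$. Consequently the adjacency matrix at time $t+1$ is identical to that at time $t$, and the $\delta$-equilibrium structure is preserved.

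By induction, this property propagates forever, so if $T_\delta$ denotes the first hitting time of a $\delta$-equilibrium, then every switching-topology step must occur strictly before $T_\delta$, and the number of such events is therefore bounded above by $T_\delta$ and hence in expectation by $\mathbb{E}[T_\delta]$. The previous theorem gives $\mathbb{E}[T_\delta]\le 2n^9(\epsilon/\delta)^2$ for any $\delta\in(0,\epsilon)$, and the choice $\delta=\epsilon/\sqrt{8}$ (so that $(\epsilon/\delta)^2=8$) delivers the advertised bound of $16n^9$ on the expected number of switching topologies.

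The main obstacle is the first step: one has to simultaneously rule out the destruction of an intra-cluster edge and the creation of a new inter-cluster edge when $i^*$ moves to its centroid. Both are controlled by the strict inequalities in the definition of $\delta$-equilibrium (which is precisely why we need $\delta<\epsilon$ rather than $\delta=\epsilon$) and by the fact that $x_{i^*}(t+1)$ is a convex combination of points in $C_k$. Once this claim is in place, the rest is purely bookkeeping on top of the earlier hitting-time estimate.
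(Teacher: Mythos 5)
Your proof is correct, but it follows a genuinely different route from the paper's. You reduce the switch count to the $\delta$-equilibrium hitting time via a topology-freezing lemma: once $x(t)$ is a $\delta$-equilibrium with $\delta<\epsilon$, the updating agent's neighborhood is exactly its own cluster, its best response stays in that cluster's convex hull, so diameters cannot grow, inter-cluster hull distances cannot shrink, and the communication graph is frozen forever; hence every switch occurs before $T_\delta$, and $\mathbb{E}[T_\delta]\le 2n^9(\epsilon/\delta)^2=16n^9$ for $\delta=\epsilon/\sqrt{8}$. The paper instead charges each switching time directly to the potential $U$: it shows that a switch out of an $\frac{\epsilon}{2}$-trivial profile must create an inter-cluster edge of length at least $\frac{\epsilon}{2}$ (the mover travels at most $\frac{\epsilon}{2}$ while the target was more than $\epsilon$ away), so either $x(t-1)$ or $x(t)$ is non-$\frac{\epsilon}{2}$-trivial, forcing $\mathbb{E}[U(t+1)-U(t-1)]\ge \epsilon^2/(8n^7)$, and then sums against $U\le n^2\epsilon^2$. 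Your version is more modular and avoids the paper's two-step charging and double-counting bookkeeping, at the price of bounding switches by the cruder quantity "all steps before equilibrium" rather than only the switching steps, and of making the constant $16$ an artifact of the choice $\delta=\epsilon/\sqrt{8}$ rather than of the switching mechanism itself. One point worth making explicit if you write this up: your freezing lemma genuinely needs the convex-hull separation $dist(conv(C_i),conv(C_j))>\epsilon$ from Definition~\ref{def:delta-equilibrium}, not merely that all pairwise distances between distinct communication components exceed $\epsilon$ (in $d\ge 2$ hulls of separated components can interleave, and a centroid update could then create a new edge); you are entitled to this because the preceding theorem is stated for $\delta$-equilibria in exactly that strong sense, but the dependence should be flagged.
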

\smallskip
\begin{proof}
We show that switching topologies substantially increase the expected value of the potential function. To see that, first assume that the opinion profile at time $t-1$, i.e., $x(t-1)$, is $\frac{\epsilon}{2}$-trivial, and that updating some agent $i$ at this time causes a switch in the topology of the network. We claim that the next profile, i.e., $x(t)$, is not $\frac{\epsilon}{2}$-trivial. Note that since there is a switch at time $t$ and that within each of the $\frac{\epsilon}{2}$-trivial components each agent is able to observe the others, the convex hull of such a component shrinks even further after the updating of any agent in the component. Therefore, the switches must occur between the $\frac{\epsilon}{2}$-trivial components and not within them.

Now, let us assume that $i\in C_p$ ($C_p$ denotes an $\frac{\epsilon}{2}$-trivial component) and that updating agent $i$ at time $t-1$ makes him visible to another agent $j$ in a different $\frac{\epsilon}{2}$-trivial component $C_q$ (Figure \ref{fig:FigurePic2}). Since $C_p$ is an $\frac{\epsilon}{2}$-trivial component and the agents in $C_p$ are all the agents who are visible to agent $i$ at time $t-1$, the movement of agent $i$ from $x_i(t-1)$ to $x_i(t)$ can be at most $\frac{\epsilon}{2}$. Moreover, since agents $j$ and $i$ belong to different $\frac{\epsilon}{2}$-trivial components, their distance at time $t-1$ was larger than $\epsilon$. That means that such a switching causes $i$ and $j$ to make a link with a distance of at least $\frac{\epsilon}{2}$ in the profile $x(t)$.

\begin{figure}[htb]
\begin{center}
\includegraphics[totalheight=.2\textheight,
width=.3\textwidth,viewport=0 0 500 500]{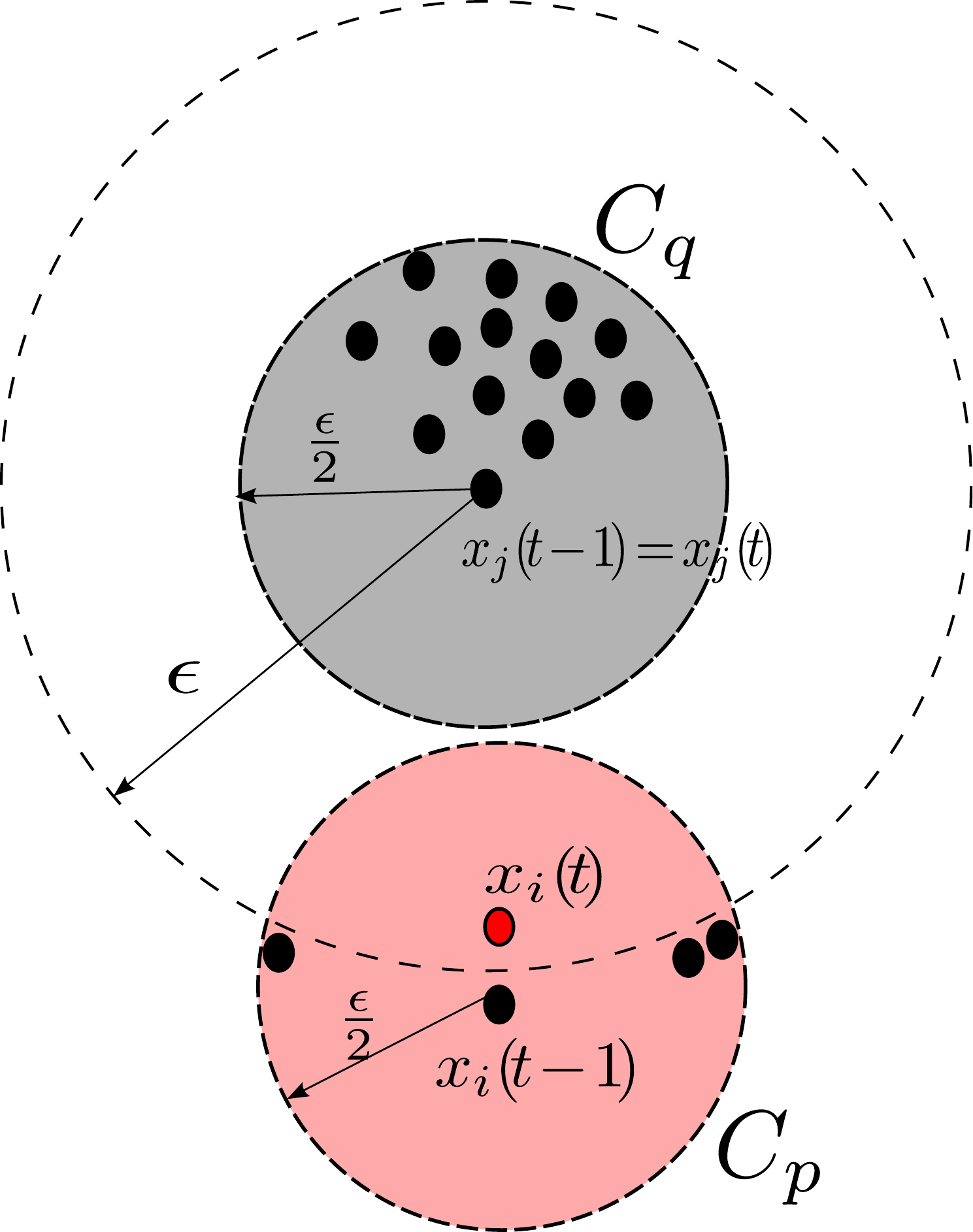} \hspace{0.4in}
\end{center}
\vspace{-0.4cm}
\caption{Switching topology at time $t$ from an $\frac{\epsilon}{2}$-trivial profile $x(t-1)$.}
\label{fig:FigurePic2}
\end{figure}

Now we partition all the possible switching times based on the profile at the previous time instant: 
\smallskip
\begin{itemize}
\item Time $t$ is a switching time, and $x(t-1)$ is an $\frac{\epsilon}{2}$-trivial profile. In this case and based on the above argument, $x(t)$ is not an $\frac{\epsilon}{2}$-trivial profile, and using the same argument as in relation \eqref{eq:c-bar-epsilon-trivial} and in view of \eqref{eq:decrease-lambda_2(Q)} and \eqref{eq:lowerbound-lambda-Q}, we get $\sum_{k=1}^{n}\|x_k(t)-x_k(t+1)\|^2\ge \frac{\epsilon^2}{16n^6}$. 

\smallskip 
\item Time $t$ is a switching time, and $x(t-1)$ is not an $\frac{\epsilon}{2}$-trivial profile. In this case and within a non-$\frac{\epsilon}{2}$-trivial component, using the same argument as in the first case, we get $\sum_{k=1}^{n}\|x_k(t-1)-x_k(t)\|^2\ge \frac{\epsilon^2}{16n^6}$.        
\end{itemize}

\smallskip
Therefore, if $t$ is a switching time, using \eqref{eq:expected-decrease-potential} we conclude that there is an increase of $\frac{\epsilon^2}{16n^6}$ at either time $t-1$ or $t$ in the expected potential function. In other words, if $t$ is a switching time, using \eqref{eq:expected-decrease-potential} we can write, 
\begin{align}\nonumber
\mathbb{E}[U(t\!+\!1)\!-\!U(t\!-\!1)]&=\mathbb{E}[U(t\!+\!1)\!-\!U(t)]\cr 
&+\mathbb{E}[U(t)\!-\!U(t\!-\!1)]\ge \frac{2}{n}\frac{\epsilon^2}{16n^6}=\frac{\epsilon^2}{8n^7}.
\end{align}

Now, given an arbitrary initial profile $x(0)$, let us use $p_t$ to denote the probability of occurrence of a switching at time $t=1,2,\ldots$. Therefore, the amount of increase in the expected potential function is at least $\sum_{t=0}^{\infty}p_t\frac{\epsilon^2}{16n^7}$ (since we may count each instant twice). On the other hand, since $U(\tau)\leq n^2\epsilon^2, \forall \tau=1,2,\ldots$, we conclude that $\sum_{t=0}^{\infty}p_t$. But $\sum_{t=0}^{\infty}p_t$ is exactly equal to the expected number of switching topologies. Therefore, the expected number of switching topologies is bounded from above by $16n^9$. 
\end{proof}
   
\bigskip                
\section{Heterogeneous Hegselmann-Krause Dynamics}\label{sec:mainresults-three}

\smallskip
Once again we consider the Hegselmann-Krause model \eqref{eq:synchronous-Hegselmann-Krause}, but this time we assume that each agent $i$ has his or her own bound of confidence $\epsilon_i$, which could be different from the others. Therefore, $\N_i(x(t))=\{1\leq j \leq n: \|x_i(t)-x_j(t)\|\leq \epsilon_i \}$ and $A(t), t\ge 0$ will change correspondingly. That causes an asymmetry for the interactions among the agents. In other words, there is a possibility that one agent $x_i(t)$ observes agent $x_j(t)$ but not vice versa. In fact, we are interested in studying the convergence behavior of such dynamics. In contrast with the homogeneous Hegselmann-Krause model, which reaches its steady state after finite time, the following example shows that in the heterogeneous case, steady state may not be reached in finite time. 
\smallskip
\begin{example}
Consider three agents $x_1, x_2, x_3$ that are located at $-1, \frac{1}{3}, 1$,
respectively, at the initial time $t = 0$. Also, let us assume $\epsilon_1 = \frac{1}{2}, \epsilon_2 = 2, \epsilon_3 =\frac{1}{2}$. As can be seen, agent $x_2$ is able to see all the agents at each time
step. Therefore, after the first iteration, $x_2(1)=\frac{-1+\frac{1}{3}+1}{3}=\frac{1}{3^2}$, and since the confidence bounds of $x_1$ and  $x_3$ are small, they can see no one except themselves, and hence they will remain in their own locations. Therefore, at time $t = 1$, we will have $x_1(1)=-1, x_2(1)=\frac{1}{3^2}, x_3(1) = 1$. With the same line of argument, it is not hard to see that at any time instant $t=1,2,\ldots$ the position of agents will be $x_1(t)=-1, x_2(t)=\frac{1}{3^{t+1}}, x_3(t)=1$. That shows that the dynamics will converge to their steady state $(-1, 0, 1)$, but not in finite time.
\end{example}

\smallskip
In the above example, one of the main reasons that the convergence was
not achieved in finite time was that there were two agents who didn't have interaction with others in the dynamics and remained fixed without any movement forever. We refer to such agents as \textit{silent agents}. In the next theorem, we show that if the amount of time an agent sleeps (is inactive) is finite, then we will have finite time convergence of the dynamics to their steady state. We note that similar type of such asynchronous analysis under different scenarios and settings can be found in \cite{li1987asymptotic,bertsekas1983distributed,tsitsiklis1984convergence}. 

\bigskip
\begin{theorem}
Consider the heterogeneous Hegselmann-Krause model, where the $i$th agent $i\in [n]$ has a confidence bound of $\epsilon_i > 0$. Also, assume that there is an integer $T^*$ such that no agent is silent for a period of time longer than $T^*$. Then, the dynamics will converge to their steady state in finite time.
\end{theorem}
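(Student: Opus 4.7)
The plan is to argue by contradiction. Suppose the dynamics do not reach a steady state in finite time; then for arbitrarily large $t$ at least one agent strictly moves at time $t$. Since each $x_i(t+1)$ is a convex combination of the current opinions $\{x_j(t)\}$, the convex hull $conv(x(t))$ is non-increasing in $t$, so the trajectory is trapped in a compact subset of $\mathbb{R}^{n\times d}$. By Bolzano--Weierstrass there exists a subsequence $\{t_k\}$ along which $x(t_k)$ converges to some accumulation point $x^*$.

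Next, I would exploit the silence hypothesis: within each window $[t_k, t_k+T^*]$ every one of the $n$ agents must update (i.e.\ have a non-trivial neighborhood and hence move) at least once. Since the communication graph $\mathcal{G}(t)$ takes values in a finite set (at most $2^{n^2}$ possibilities), by passing to a further subsequence I may assume that the ordered tuple $(\mathcal{G}(t_k), \mathcal{G}(t_k+1), \ldots, \mathcal{G}(t_k+T^*))$ is the same word $(\mathcal{G}_0, \ldots, \mathcal{G}_{T^*})$ for every large $k$. The composition of the corresponding update maps is continuous on the region of profiles where each $\mathcal{G}_s$ is the active topology, and therefore the limit $x^*$ is a fixed point of this composition. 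Because every agent updates at least once inside the window, this forces $x^*$ itself to be a steady state of the heterogeneous dynamics---i.e., agents partition into position-sharing clusters that are pairwise separated by more than the relevant confidence bounds.

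The final step converts asymptotic convergence into finite-time termination by exploiting the discrete geometry of steady states. For opinions sufficiently close to $x^*$, the topology $\mathcal{G}(x)$ coincides with $\mathcal{G}(x^*)$ at every pair whose distance at $x^*$ is strictly different from the corresponding $\epsilon_i$. Once the active topology becomes $\mathcal{G}(x^*)$---a disjoint union of self-observing clusters---any single update by an agent in a cluster brings that agent exactly to the cluster centroid, which equals $x^*_i$. Because the silence bound guarantees every agent updates within the next $T^*$ steps, at most one additional $T^*$-window after the topology stabilizes suffices to place every agent exactly at its steady-state position, contradicting the assumption of non-termination.

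The main obstacle is handling degenerate boundary configurations in which $\|x^*_i - x^*_j\| = \epsilon_i$ exactly for some pair $(i,j)$, since then the edge $i \to j$ can flicker on and off under arbitrarily small perturbations and the limit topology is not uniquely defined near $x^*$. To overcome this, I would further subdivide the argument: activating or deactivating such a boundary edge either does not alter agent $i$'s next opinion (because $x^*_j$ already lies at the centroid of $\N_i(x^*)$ and adding or removing it leaves the centroid unchanged), or it produces a strictly positive motion at $x^*$; the latter, occurring at infinitely many of the $t_k$, would contradict $x^*$ being the limit of the subsequence.
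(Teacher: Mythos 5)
Your strategy (compactness, a convergent subsequence, and a limit fixed point) has two gaps that I believe are fatal, and both sit exactly where the heterogeneity makes the problem hard. First, the step ``the limit $x^*$ is a fixed point of the window composition'' does not follow from $x(t_k)\to x^*$. Writing $F$ for the composition of the $T^*+1$ update maps, continuity at best gives $x(t_k+T^*+1)\to F(x^*)$ along a further subsequence; to conclude $F(x^*)=x^*$ you would also need $x(t_k+T^*+1)\to x^*$, i.e.\ $\|x(t+1)-x(t)\|\to 0$ or convergence of the full sequence. In the homogeneous model this is supplied by the quadratic Lyapunov function of Lemma 5, but no such function is available in the heterogeneous model --- the paper explicitly flags constructing one as an open direction in Section \ref{sec:discussion} --- and your argument never extracts any quantitative decrease from the silence hypothesis. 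Second, even granting a limit fixed point, your description of heterogeneous steady states as ``position-sharing clusters pairwise separated by more than the relevant confidence bounds'' is false: an agent can be stationary because it sits exactly at the centroid of a nontrivial, non-separated neighborhood, as agent $x_2$ does in the paper's own three-agent example, whose limit $(-1,0,1)$ has $x_2$ seeing both other agents. Consequently the endgame --- the topology freezes into disjoint self-observing cliques and one more $T^*$-window lands every agent exactly on $x^*_i$ --- collapses; and even inside a genuine clique, a single agent's update moves it to the centroid of the \emph{current} positions of its neighbors, not to $x^*_i$.

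The paper takes a different, and in my view unavoidable, route: induction on the number of agents combined with an information-propagation argument. It tracks the set $S(t)$ of agents whose opinion has been influenced by agent $1$'s initial opinion (those $i$ with $(A(t)\cdots A(0))_{i1}>0$), uses the induction hypothesis together with the no-long-silence assumption to show that $S(t)$ must keep growing until it equals $[n+1]$ within $T=(n+1)(T^*+T_n)$ steps, deduces that every product of $T$ consecutive update matrices has a strictly positive column with entries at least $(\frac{1}{n+1})^{T}$, and then invokes the convex-hull contraction lemma (Lemma 4) to shrink $diam(conv(x(t)))$ geometrically below $\min_i\epsilon_i$, after which the communication graph is complete and the very next step produces consensus, hence termination. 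Note that the silence hypothesis does real work there --- it forces information to cross between otherwise decoupled groups --- whereas in your proof it is used only to say that every agent moves at least once per window, which is too weak to exclude the asymptotic-but-never-terminating behavior the theorem is designed to rule out.
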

\smallskip
\begin{proof}
We prove the theorem by induction on the number of agents. For $n=1$
the result is obvious, and the initial time is the termination time. Let us assume that the result holds for each $k\leq n$, and now suppose that we have $n+1$ agents with different confidence bounds. We show that there is a finite time $T$ such that the left product of every $T$ consecutive matrices $A(t), t\ge 0$ of the dynamics will generate a matrix with at least one positive column.  

\smallskip
Starting from agent 1, let us define 
\begin{align}\nonumber
S(t)=\{i\in [n+1]|(A(t)A(t-1)\ldots A(0))_{i1}>0\},
\end{align}
and $S^{c}(t)=[n+1]\setminus S(t)$ to be its complement. Since each agent can see itself at each time instant, if $i\in S(t)$ for some time $t$, then it will be in $S(t')$ for all $t'\ge t$. In other words, we have $S(0)\subseteq S(1)\subseteq S(2)\subseteq \ldots$. Now we claim that there must be a finite time $T$ such that $S(T)=[n+1]$. Otherwise, let us assume that there exists a time instant $t_0$ such that $S(t_0)=S(t), \forall t>t_0$. By the definition of $S(t)$, that means that for $t>t_0$, none of the agents in $S^c(t)$ can see any agent in $S(t)$ (although it may happen that some agents in $S(t)$ are still able to see some of the agents in $S^c(t)$). That means that the agents in the set $S^c(t_0)$ constitute a group of agents whose opinions in the future of the dynamics $t\ge t_0$ will not be influenced by any other agent in $S(t_0)$. On the other hand, since $|S^c(t_0)|\leq n$ (note that $S(0)=\{1\}$), according to the induction assumption, the agents in $S^c(t_0)$ will reach their steady state after some finite time $T_n$, where $T_n$ denotes the maximum number of steps for $n$ agents to reach their steady state, which, by induction assumption, is considered to be a finite number. However, under the hypothesis of the Theorem, after reaching the steady state, these agents cannot remain silent for more than $T^*$ more steps. Therefore, after a finite time $T^* +T_n$, at least one more agent will be added to the set $S(t_0)$, and the cardinality of $S(t_0)$ will increase by at least 1. Since the total number of agents is $n+1$, $T:=(n+1)(T^*+T_n)$ steps are enough to guarantee $S(T)=[n+1]$. That shows that $A(T)\ldots A(1)A(0)$ will be a matrix in which the first column will be strictly positive. 

On the other hand, since all the positive entries of those matrices are bounded from below by $\min^+(A(t))\ge \frac{1}{n+1}$, the minimum positive entry of the left product of every $T$ consecutive such matrices will be larger than $(\frac{1}{n+1})^{T}$. Using Lemma \ref{lemm:convex-hull}, we can see that after every $T$ steps, the diameter of the convex hull of the agents' opinions will shrink by a factor of at least $1-(\frac{1}{n+1})^{T}$. Therefore, there exists a finite time $T_{n+1}<\infty$ such that the diameter of the convex hull of the agents' opinions at time $T_{n+1}$ is smaller than $\min_{i\in [n+1]}\epsilon_i$. That means that after $T_{n+1}$ steps, every agent is able to observe the others in his or her own neighborhood, and in the next step, the dynamics reach a steady state.    
\end{proof}

\smallskip
In fact, the above theorem asserts that if there exists an external input which creates an incentive for the agents to interact with someone else after some period of time, then the circulation of information in the society will be sufficient to guarantee the finite time formation of the opinions.    

\bigskip
\section{Discussion}\label{sec:discussion}
Inspired by the results given in Section \ref{sec:mainresults-two}, we will now discuss some of the possible directions that could be pursued to analyze the asynchronous heterogeneous Hegselmann-Krause model in more detail. In fact, because of the different confidence bounds, the symmetry from which we benefit in the homogeneous case does not hold anymore. Therefore, the communication topology in this case can be interpreted as a digraph (directed graph) instead of an undirected graph. In this case one way of showing the asymptotic convergence of the heterogeneous Hegselmann-Krause dynamics to an steady state is to design a proper utility function for each player such that the resulting network formation game changes to a team problem, such that each player's update contributes an increase (decrease) to a global function toward an equilibrium.

\smallskip
A natural idea here is to define the utility of the players based on functions of their own confidence bound and their relative distance from others such that their best response dynamics coincide with the evolution of the asynchronous heterogeneous Hegselmann-Krause dynamics. For example, one may define the utility of the $i$th player to be $U_i(t)=(n-1)\epsilon_i^2-\sum_{j=1}^{n}\min\{(x_i(t)-x_j(t))^2,\epsilon^2_i\}$, where $\epsilon_i$ denotes the confidence bound of the $i$th agent and $x(t)=(x_1(t),x_2(t),\ldots,x_n(t))$ denotes the opinion profile at time instant $t$. It turns out that such utility functions do not make the network formation game a potential game or lead to a strategically equivalent team problem. However, one can consider 15 different possibilities for creation or breaking of edges among agents, assuming that only one agent updates (deviates) to a new position. In that case, one can think of a proper weighting on the edges in order to \textit{distinguish} one-sided edges from symmetric (two-sided) edges. For example, if there is a one-sided edge from player $i$ to player $j$, one can rescale the utility of agent $i$ by a fraction of his own confidence bound and his neighbors' in order to adjust the influence of other players' actions on his own utility function. At this point, we are not aware of any such utility functions, and we leave the full analysis of the heterogeneous Hegselmann-Krause dynamics as a future direction of research.          
   
\bigskip
\section{Conclusion}\label{sec:conclusion}
In this paper, we studied the termination time of the Hegselmann-Krause dynamics in finite dimensions and under various settings: synchronous, asynchronous, homogeneous, and heterogeneous. We provided a polynomial upper bound for the termination time of the synchronous homogeneous model independent of the dimension of the ambient space. We showed that the asynchronous Hegselmann-Krause model can be formulated as a sequence of best response dynamics of a potential game. Furthermore, we provided an upper bound for the expected number of steps until the dynamics reaches its $\delta$-equilibrium. In particular, we bounded the expected number of switchings in the topology of the networks during the evolution of the system. We considered the heterogeneous Hegselmann-Krause dynamics, and we obtained a necessary condition for finite time convergence of such dynamics. Finally, we discussed some of the possible future directions that could be pursued to enable analysis of heterogeneous Hegselmann-Krause dynamics in more detail. As a future direction of research, one may think of how to enrich the Hegselmann-Krause model in order to remove some of its current limitations. As an example, one could modify the model by allowing the agents with the same opinion to be related to each other by some constraints, meaning that having the same opinion at some time instant does not necessarily lead to having the same opinion for all the future time instances.     

\bibliographystyle{IEEEtran}
\bibliography{thesisrefs}
\end{document}